




\documentclass[sigconf]{preprint}


\makeatletter 
\gdef\acmConference@shortname{}   
\gdef\acmConference@name{}        
\gdef\acmConference@date{}        
\gdef\acmConference@venue{}       
\gdef\acmConference@editors{}     
\makeatother 

\usepackage{balance} 
\usepackage{enumitem}


\setcopyright{none}

\acmSubmissionID{1665}


\title{Beyond Outcome-Based Imperfect-Recall: Higher-Resolution Abstractions for Imperfect-Information Games}



\author{Yanchang FU}
\affiliation{
  \department{School of Artificial Intelligence} 
  \institution{University of Chinese Academy of Sciences}
  \city{Beijing}
  \country{China}
}
\additionalaffiliation{
  \department{Institute of Automation}
  \institution{Chinese Academy of Sciences}
  \city{Beijing}
  \country{China}
}
\email{fuyanchang2020@ia.ac.cn}

\author{Qiyue YIN}
\affiliation{
  \department{Institute of Automation}
  \institution{Chinese Academy of Sciences}
  \city{Beijing}
  \country{China}
}
\email{qyyin@nlpr.ia.ac.cn}

\author{Shengda LIU}
\affiliation{
  \department{Institute of Automation}
  \institution{Chinese Academy of Sciences}
  \city{Beijing}
  \country{China}
}
\email{shengda.liu@ia.ac.cn}

\author{Pei XU}
\affiliation{
  \department{Institute of Automation}
  \institution{Chinese Academy of Sciences}
  \city{Beijing}
  \country{China}
}
\email{xupei2018@ia.ac.cn}

\author{Kaiqi HUANG}
\affiliation{
  \department{Institute of Automation}
  \institution{Chinese Academy of Sciences}
  \city{Beijing}
  \country{China}
}
\email{kqhuang@nlpr.ia.ac.cn}


\begin{abstract}

Hand abstraction is crucial for scaling imperfect-information games (IIGs) such as Texas Hold'em, yet progress is limited by the lack of a formal task model and by evaluations that require resource-intensive strategy solving. We introduce \textbf{signal observation ordered games (SOOGs)}, a subclass of IIGs tailored to hold'em-style games that cleanly separates signal from player action sequences, providing a precise mathematical foundation for hand abstraction. Within this framework, we define a resolution bound—an information-theoretic upper bound on achievable performance under a given signal abstraction. Using the bound, we show that mainstream outcome-based imperfect-recall algorithms suffer substantial losses by arbitrarily discarding historical information; we formalize this behavior via \textbf{potential-aware outcome Isomorphism (PAOI)} and prove that PAOI characterizes their resolution bound. To overcome this limitation, we propose \textbf{full-recall outcome isomorphism (FROI)}, which integrates historical information to raise the bound and improve policy quality. Experiments on hold'em-style benchmarks confirm that FROI consistently outperforms outcome-based imperfect-recall baselines. Our results provide a unified formal treatment of hand abstraction and practical guidance for designing higher-resolution abstractions in IIGs.

\end{abstract}


\keywords{Game Theory; Imperfect-Information Games; Hand Abstraction}


         
\newcommand{\BibTeX}{\rm B\kern-.05em{\sc i\kern-.025em b}\kern-.08em\TeX}


\begin{document}


\pagestyle{fancy}
\fancyhead{}


\maketitle 

\section{Introduction}

In recent years, artificial intelligence (AI) for \textbf{heads-up no-limit hold'em} (HUNL)—a two-player poker variant and classic testbed for \textbf{imperfect-information games} (IIGs, where players lack full knowledge of the current state during playing)—has achieved groundbreaking milestones. AI systems including DeepStack, Libratus, and Pluribus~\cite{moravvcik2017deepstack, brown2018superhuman, brown2019superhuman} have all defeated top human professionals, highlighting AI's potential in mastering complex strategic reasoning.

Hand abstraction has emerged as a game-simplification technique to address the challenges of building AI for large-scale IIGs, particularly HUNL. This need arises because state-of-the-art (SOTA) IIG solvers rely on \textbf{counterfactual regret minimization (CFR)}~\cite{zinkevich2007regret} and its variants~\cite{lanctot2009monte, tammelin2014solving, brown2019solving, zhou2020lazy}, whose prohibitive spatial overhead far exceeds current computational resources. Among existing hand abstraction methods, \textbf{potential-aware abstraction with earth mover's distance (PAAEMD)}~\cite{ganzfried2014potential} is the SOTA and has been integrated as a critical component in high-performance HUNL AIs like DeepStack, Libratus, and Pluribus.

However, hand abstraction—primarily an engineering-driven technique—has three key limitations. First, it lacks a formal mathematical model: no prototype exists in established IIG frameworks to characterize its game-simplification logic, leaving no systematic tool to guide the design of hand abstraction algorithms (development relies heavily on empirical trial and error). Second, its evaluation is indirect and resource-intensive: a strategy must first be solved for the abstracted game, then assessed in the original game. This undermines its core goal of simplifying strategy solving—even solving the abstracted game incurs substantial overhead, yet remains a prerequisite for evaluation. Third, mainstream abstraction algorithms (e.g., PAAEMD) fully discard historical game information, causing severe information loss. This inherent flaw caps performance, as critical prior-state context (essential for distinguishing strategically distinct hands) is arbitrarily excluded, compromising the abstraction's ability to preserve the original game's strategic structure.

This paper makes three key contributions with structure as follows to elaborate on these workstreams.

First, to address the lack of refined modeling for hand abstraction, we construct \textbf{signal observation ordered games (SOOGs)}—a subset of IIGs tailored to capture the characteristic components of Texas Hold'em-style games (referred to as \textbf{hold'em games} throughout, including HUNL). Building on SOOGs, we further develop a formal mathematical model for hand abstraction (termed \textbf{signal observation abstraction}). This is detailed in Section~\ref{sec:models}, following essential background and notations in Section~\ref{sec:background}.

Second, to overcome the inefficiency of indirect hand abstraction evaluation (which relies on resource-heavy strategy solving), we propose \textbf{resolution bound}—a rational indicator to directly measure the maximum potential performance of hand abstraction algorithms—and validate its suitability as an evaluation metric via Theorem~\ref{thm:monotonicity} in Section~\ref{sec:evaluation-poi-froi}.

Third, we design two targeted hand abstractions: \textbf{potential-aware outcome isomorphism (PAOI)} and \textbf{full-recall outcome isomorphism (FROI)}. PAOI illustrates flaws in mainstream algorithms and is proven to be their resolution bound, explicitly revealing critical limitations (notably arbitrary historical information discard). FROI, by contrast, demonstrates how integrating historical information mitigates these flaws and improves abstraction quality. Section~\ref{sec:evaluation-poi-froi} further details both methods and verifies PAOI's role as a baseline bound.

These contributions lead to our core conclusion: arbitrary discard of historical information degrades hand abstraction performance, while integrating it preserves the original game's strategic features and boosts performance. Section~\ref{sec:experiment} provides experimental validation, confirming FROI's significant performance superiority over PAOI and reinforcing the value of historical information for advancing hand abstraction techniques.

\subsection{Related Works}

Hand abstraction for hold'em games traces its roots to the foundational work of \citet{shi2001abstraction} and \citet{billings2003approximating}. \citet{gilpin2006competitive} made early attempts at automatic methods in this field, marking a shift from manual hand categorization. Later, \citet{gilpin2007lossless} introduced \textbf{lossless isomorphism (LI)} for hand abstraction—an approach that constructs equivalence classes via card suit rotation. However, LI's low compression rate shifted research focus toward lossy methods based on hand outcome win rates and \textbf{imperfect-recall}~\cite{waugh2009practical}, including expected  hand strength (EHS)~\cite{gilpin2007better}, potential-aware abstractions (PAAs)~\cite{gilpin2007potential}, and their SOTA variant, PAAEMD~\cite{ganzfried2014potential}. Notably, the aforementioned lossy abstraction algorithms all follow the paradigm of arbitrary historical information discard, relying on extreme imperfect-recall; this is a critical limitation we aim to address in this work.

For IIG modeling relevant to hold'em games, \citet{gilpin2007lossless}'s \textbf{games with ordered signals} framework is the closest, yet it suffers from cumbersome phase modeling, limited generalizability, and a lack of formal hand abstraction models. Subsequent IIG abstraction research (e.g., \citet{waugh2009abstraction}; \citet{kroer2016imperfect, kroer2018unified}) focused on abstraction but conflated it with techniques like action abstraction. Works on public states (\citet{johanson2011accelerating}) and public belief states (\citet{vsustr2019monte}; \citet{lisy2017eqilibrium}) first noted the uniqueness of public action sequences in hold'em games (i.e., they can be studied independently), while recent factored-observation stochastic games (FOSG) models~\cite{kovavrik2022rethinking,schmid2023student} further recognized that observations can be modeled independently—and applied to broader scenarios beyond hold'em games. These works advanced IIG modeling but did not involve hand abstraction.

\section{Background} \label{sec:background}
In this section, we first provide an overview of hold'em games, then introduce the core concept of imperfect-information games (IIGs), and finally touch on content related to strategy solving for them.

\subsection{Hold'em Games}

Hold'em games are a class of poker games where players combine private cards (hole) with shared community cards (board) to construct the best possible x-card poker hand. These games involve multiple betting phases, during which players make decisions based on their hands and their observations of opponents' actions. Players can take multiple actions during the game, with the outcome determined either by all but one player conceding (folding) or by a showdown where the remaining players reveal their hands to determine the winner.

\textbf{Heads-up limit hold'em (HULH)} and HUNL are prominent benchmark environments for AI research in IIGs. In both games, two players compete using two private cards and five community cards revealed over four betting phases: Preflop, Flop, Turn, and River. Notably, while the two variants differ in betting structures—HULH features fixed bet sizes in each phase (facilitating more tractable analysis), and HUNL allows players to bet any amount up to their entire stack (resulting in a larger action space and greater strategic complexity) — their rules for dealing cards and forming valid hands are identical. Since this paper focuses on hand abstraction, we will only involve HULH in subsequent sections.

Hold'em games also include other variations that serve as toy games in AI research, such as Kuhn poker~\cite{kuhn1950simplified}, Leduc Hold'em~\cite{southey2005bayes}, and Flop Hold'em~\cite{brown2017safe}. These simplified games reduce the complexity of full-scale poker while preserving core strategic elements, making them valuable for AI experimentation and theoretical analysis. 

\subsection{Imperfect-Information Games}

Imperfect-information games are a standard mathematical model for modeling hold'em games.

\begin{definition}[Imperfect-Information Game (IIG)] \label{def:iig}
An imperfect-information game \(\mathcal{G}\) is described by the following components:

\begin{itemize}[left=0cm]
    \item \(\mathcal{N}_c = \mathcal{N} \cup \{c\}\): A finite set of players. \(\mathcal{N} = \{1, \ldots, N\}\) denotes rational players (the actual game participants), and the special player \(c\) represents random factors (referred to as \textbf{nature} or \textbf{chance}).
    
    \item \(\mathcal{A}\): The set of all possible actions available to players in \(\mathcal{N}_c\) throughout the game.
    
    \item \(H\): A finite set of histories, where each history is a sequence of successive player actions. The empty sequence \(h^o \in H\) is the initial history. The notation \(h \sqsubset h'\) means \(h\) is a predecessor of \(h'\) (and \(h'\) is a successor of \(h\))—formally, there exists \(a_1, a_2, \ldots, a_k \in \mathcal{A}\) such that \(h' = h \cdot a_1 \cdot a_2 \cdot \ldots \cdot a_k\) (i.e., \(h'\) is \(h\) with the actions appended).
    
    \item \(Z \subseteq H\): The set of terminal histories (histories with no successors), where the game terminates.
    
    \item \(\rho: H \setminus Z \mapsto \mathcal{N}_c\): An action assignment function that specifies exactly one player to act at each non-terminal history. It induces a partition \(\{H_1, \ldots, H_N, H_c\}\) of \(H\setminus Z\), where \(H_i = \{ h \in H\setminus Z \mid \rho(h) = i \in \mathcal{N}_c \}\) (the set of histories where player \(i\) acts).
    
    \item \(A: H \setminus Z \mapsto 2^{\mathcal{A}}\): A legal action function that maps each non-terminal history to its set of legal actions. Here, \(2^{\mathcal{A}}\) denotes the power set of \(\mathcal{A}\) (i.e., all subsets of \(\mathcal{A}\)).
    
    \item \(\zeta\): A chance probability function. For each non-terminal history \(h \in H_c\) (where nature acts), \(  \zeta(h, \cdot): A(h) \to [0,1]\) defines a probability distribution over \(A(h)\), satisfying the normalization condition \(\sum_{a \in A(h)}  \zeta(h, a) = 1\).
    
    \item \(u = (u_i)_{i \in \mathcal{N}}\): A tuple of utility functions. For each player \(i \in \mathcal{N}\), \(u_i: Z \to \mathbb{R}\) assigns a real-valued payoff to \(i\) at every terminal history \(z \in Z\) (determining \(i\)'s reward when the game ends at \(z\)).
    
    \item \(\mathcal{I} = (\mathcal{I}_i)_{i \in \mathcal{N}}\): A collection of information partitions. For each \(i \in \mathcal{N}\), \(\mathcal{I}_i\) partitions \(H_i\) into disjoint \textbf{information sets (infosets)}. If \(h, h' \in H_i\) belong to the same infoset \(I \in \mathcal{I}_i\), rational player \(i\) cannot distinguish \(h\) and \(h'\); this implies \(A(h) = A(h')\), so we overload notation as \(A(I) := A(h) = A(h')\).
\end{itemize}
\end{definition}

\subsection{Strategies and Solution Concepts for IIGs}
In an IIG, each player \(i \in \mathcal{N}\) selects a \textbf{(behavioral) strategy} \(\sigma_i\), where \(\Sigma_i\) denotes the set of all such strategies for \(i\). A strategy \(\sigma_i\) describes the probability distribution over actions chosen at each infoset: formally, \(\sigma_i = \{ \sigma_i(I, \cdot) \mid I \in \mathcal{I}_i \}\), where \(\sigma_i(I, \cdot): A(I) \to [0,1]\) satisfies \(\sum_{a \in A(I)} \sigma_i(I, a) = 1\) (normalization) for each infoset \(I \in \mathcal{I}_i\). When all rational players select their strategies, they form a strategy profile \(\sigma = (\sigma_1, \ldots, \sigma_N)\), with the set of all possible profiles denoted by \(\Sigma = \times_{i \in \mathcal{N}} \Sigma_i\).

To compute payoffs under a strategy profile \(\sigma \in \Sigma\), we first define the reaching probabilities of histories. For any history \(h \in H\), let \(\pi_c(h) = \prod\nolimits_{h' \sqsubset h, h' \cdot a \sqsubseteq h, \rho(h') = c} \zeta(h', a)\) denote the contribution of nature to the probability of reaching \(h\) (where \(h' \sqsubseteq h\) extends \(h' \sqsubset h\) to include the case \(h' = h\)); for a player \(i \in \mathcal{N}\), let \(\pi_i^\sigma(h) = \prod\nolimits_{h' \sqsubset h, h' \cdot a \sqsubseteq h, \rho(h') = i} \sigma_i(I[h'], a)\) denote the contribution of \(i\)'s strategy \(\sigma_i\) to reaching \(h\) (with \(I[h] \in \mathcal{I}_i\) being the infoset containing \(h\)). Since nature's choices are independent of players' strategies, \(\pi_c(h)\) is invariant to \(\sigma\) (and thus we omit \(\sigma\) in its notation, avoiding distinction between \(\pi_c^\sigma(h)\) and \(\pi_c(h)\)). The total probability of reaching \(h\) is then \(\pi^\sigma(h) = \prod_{i \in \mathcal{N}_c} \pi_i^\sigma(h)\), and the expected payoff of player \(i\) under \(\sigma\) is \(u_i(\sigma) = \sum_{z \in Z} \pi^\sigma(z) \cdot u_i(z)\).

An IIG has \textbf{perfect-recall} if every player \(i \in \mathcal{N}\) never forgets past actions or observations: for any \(h, h' \in I \in \mathcal{I}_i\), every predecessor \(h'' \sqsubset h\) with \(h'' \in H_i\) has a unique counterpart \(h''' \sqsubset h'\) in the same infoset (i.e., exists \(I' \in \mathcal{I}_i\) s.t., \(h'', h''' \in I'\)), and the actions taken by player \(i\) at \(h''\) and \(h'''\) (to reach \(h\) and \(h'\)) are identical. Under perfect-recall, a strategy profile \(\sigma^* = (\sigma_1^*, \ldots, \sigma_N^*)\) is a \textbf{Nash equilibrium} if for all \(i \in \mathcal{N}\) and \(\sigma_i \in \Sigma_i\), \(u_i(\sigma^*) \geq u_i(\sigma_i, \sigma_{-i}^*)\) (where \(\sigma_{-i}^*\) denotes the strategy profile of all players except \(i\)). Every finite perfect-recall IIG guarantees at least one such equilibrium.

For 2-player zero-sum IIGs (\(N=2\), \(u_1(z) = -u_2(z)\) for all \(z \in Z\)) with perfect-recall, the \textbf{Minimax Theorem} holds: there exists a unique game value \(v\), and strategies \(\sigma_1^* \in \Sigma_1\) and \(\sigma_2^* \in \Sigma_2\) such that \[\max_{\sigma_1 \in \Sigma_1} \min_{\sigma_2 \in \Sigma_2} u_1(\sigma_1, \sigma_2) = \min_{\sigma_2 \in \Sigma_2} \max_{\sigma_1 \in \Sigma_1} u_1(\sigma_1, \sigma_2) = v.\] While multiple Nash equilibria may exist, all yield this unique value \(v\), and any pair of maximin and minimax strategies \((\sigma_1^*, \sigma_2^*)\) forms a Nash equilibrium. \textbf{Exploitability} measures how far a strategy profile \(\sigma = (\sigma_1, \sigma_2)\) is from any Nash equilibrium \(\sigma^* = (\sigma_1^*, \sigma_2^*)\), quantified by the maximum gains each player can achieve by deviating from \(\sigma\)). Its exploitability is defined as:
\[
\epsilon(\sigma) = \frac{
    \overbrace{\left( u_1(\sigma^*) - \min\limits_{\sigma'_2 \in \Sigma_2} u_1(\sigma_1, \sigma'_2)  \right)}^{\epsilon_1(\sigma)} + 
    \overbrace{\left( u_2(\sigma^*) - \min\limits_{\sigma'_1 \in \Sigma_1} u_2(\sigma'_1, \sigma_2)  \right)}^{\epsilon_2(\sigma)}
}{2},
\]
where a profile with \(\epsilon(\sigma) \leq \epsilon\) is called an \(\epsilon\)-Nash equilibrium.


\section{Modeling for Hand Abstraction} \label{sec:models}

In this section, we present a more precise modeling of the hand abstraction task. To this end, we first impose a series of constraints on IIGs to construct a subset of IIGs, namely signal observation ordered games (SOOGs), which is used to model hold'em games. We then model the hand abstraction task using signal observation abstraction within the SOOG framework.

\subsection{Signal Observation Ordered games}

Modeling hold'em games via traditional imperfect-information games has a critical limitation: infosets are treated as indivisible atomic units. In reality, a player's decision state in hold'em games inherently encompasses two independent core dimensions—chance action sequences (hand dealings) and rational players' action sequences. These are independent in that subsets can be defined for each separately, with any combination of one sequence from each subset forming a valid game history. Yet traditional IIGs merge these distinct components into a single, monolithic infoset, precluding independent analysis of each dimension, and leaving hand abstraction without a theoretical foundation. To address this gap, we extend the traditional IIG framework with additional constraints.

Before introducing SOOGs, we first define history-related operators on IIGs. For any \(i \in \mathcal{N}_c\):
\begin{itemize}[left=0cm]
    \item \textbf{Trace extraction operator} \(\mathcal{H}_{-i}(\cdot)\) : Retains actions of all players except \(i\) in a history, replacing \(i\)'s actions with wildcard \(\emptyset_i\). Example: For \(h = a_i^1 \cdot a_j^1 \cdot a_k^1 \cdot a_i^2\)\footnote{Omits initial \(h^o\); full form: \(h = h^o \cdot a_i^1 \cdot a_j^1 \cdot a_k^1 \cdot a_i^2\)} (where \(a_i^1,a_i^2\) are \(i\)'s actions, \(a_j^1,a_k^1\) are actions of \(j,k \in \mathcal{N}_c\)),  
  \[
  \mathcal{H}_{-i}(h) = \emptyset_i \cdot a_j^1 \cdot a_k^1 \cdot \emptyset_i.
  \]
    \item \textbf{Sequence extraction operator} \(\tilde{\mathcal{H}}_{-i}(\cdot)\) : Defined as \(\tilde{\mathcal{H}}_{-i}(\cdot) := \mathcal{E} \circ \mathcal{H}_{-i}\), where \(\mathcal{E}\) (wildcard elimination operator) removes all wildcards. For the above \(h\):  
  \[
  \tilde{\mathcal{H}}_{-i}(h) = \mathcal{E}\left(\mathcal{H}_{-i}(h)\right)= \mathcal{E}\left(\emptyset_i \cdot a_j^1 \cdot a_k^1 \cdot \emptyset_i\right) = a_j^1\cdot a_k^1.
  \] 
\end{itemize}

We also define operators focusing on \(i\):  
\begin{itemize}[left=0cm]
    \item \(\mathcal{H}_{i}(\cdot)\) (trace extraction for \(i\)): \(\mathcal{H}_{i}(\cdot) := \mathcal{E} \circ \bigcirc_{\substack{j \in \mathcal{N}_c \\ j \neq i}} \mathcal{H}_{-j}\)\footnote{\(\bigcirc\): sequential composition, \(\bigcirc_{k=1}^N f_k := f_N \circ \dots \circ f_1\)}, retaining \(i\)'s actions with others replaced by wildcards.  
    \item \(\tilde{\mathcal{H}}_{i}(\cdot)\) (sequence extraction for \(i\)): \(\tilde{\mathcal{H}}_{i}(\cdot) := \mathcal{E} \circ \mathcal{H}_{i}\), eliminating wildcards from \(\mathcal{H}_{i}(h)\).  
\end{itemize}

Example for \(h\) above:  
\[
\mathcal{H}_{i}(h) = a_i^1 \cdot \emptyset_j \cdot \emptyset_k \cdot a_i^2, \quad \tilde{\mathcal{H}}_{i}(h) = a_i^1\cdot a_i^2.
\]

We introduce a notion of trace complementarity: Two extracted traces are complementary if, at every position, one contains a concrete action of a specific player $i\in \mathcal{N}_c$ while the other contains the wildcard \(\emptyset_i\) (no overlapping concrete actions or simultaneous wildcards at the same position). For any history \(h \in H\), this implies \(\mathcal{H}_i(h)\) (retaining $i$'s actions) and \(\mathcal{H}_{-i}(h)\) (retaining others' actions) are complementary. Notably, two complementary extracted traces can be spliced into a formally complete history (not guaranteed to be legal), denoted as \(h = \mathcal{H}_i(h) \oplus \mathcal{H}_{-i}(h)\) (where \(\oplus\) resolves positions by prioritizing concrete actions over wildcards).

\begin{definition}[Signal Observation Ordered Games (SOOG)] \label{def:soog}
A signal observation ordered game is an imperfect-information game \(\mathcal{G}\) (satisfying Definition~\ref{def:iig}) with the following additional constraints:

\begin{itemize}[left=0cm]
    \item \(\gamma: H \mapsto \mathbb{N}^+\): A phase partition function. For any \(h \in H\), \(\gamma(h)\) counts the number of chance histories (i.e., \(h' \in H_c\)) along the path from \(h^o\) to \(h\) (including \(h\) if \(h \in H_c\)), defining \(h\)'s phase. Let \(\Gamma = \max_{h \in H} \gamma(h)\) denote the final phase; notably, \(\gamma(h^o) = 1\) (so \(h^o\) is a chance history).
    
    \item Chance actions reveal signals: Let \(\Theta = \{\tilde{\mathcal{H}}_c(h) \mid h \in H\}\) be the total set of signals. Chance actions depend only on revealed signals: for any \(h, h' \in H_c\), if \(\tilde{\mathcal{H}}_c(h) = \tilde{\mathcal{H}}_c(h')\), then \(A(h) = A(h')\) (identical legal actions) and \(\zeta(h, a) =  \zeta(h', a)\) for all \(a \in A(h)\) (identical action probability distributions). Since the chance function \(\zeta\) at chance histories only correlates with already dealt signals, and all legal actions $a$ at chance histories serve to reveal new signals, we can construct a more focused function \(\xi(\theta, \theta')\) to simplify and replace \(\zeta(h, a)\)—where \(\theta \in \Theta\) is the revealed signal at the current chance history, and \(\theta' \in \Theta\) is the new signal revealed by action a.
    
    \item Signal-action separability: For any histories \(h_1, h_2, h_1' \in H\), if \(\tilde{\mathcal{H}}_c(h_1) = \tilde{\mathcal{H}}_c(h_1')\) (i.e., they share identical chance action sequences) and \(\mathcal{H}_{-c}(h_1) = \mathcal{H}_{-c}(h_2)\) (i.e., they share identical non-chance action traces), then there must exist a history \(h_2' \in H\) such that \(\tilde{\mathcal{H}}_c(h_2') = \tilde{\mathcal{H}}_c(h_2)\) and \(\mathcal{H}_{-c}(h_2') = \mathcal{H}_{-c}(h_1')\).
    
    \item Signal observation partitions: Let \(\Psi = (\Psi_i)_{i \in \mathcal{N}}\), where \(\Psi_i\) (signal observation infosets for \(i\)) is a partition of \(\Theta\). Let \(\vartheta = (\vartheta_i)_{i \in \mathcal{N}}\) be a tuple of observation functions, where \(\vartheta_i: \Theta \mapsto \Psi_i\) maps each signal \(\theta \in \Theta\) to its corresponding signal observation infoset \(\psi \in \Psi_i\). Signals in the same \(\psi\in \Psi_i\) are indistinguishable to \(i\).
    
    \item Phase-specific subsets: For each phase \(r \in \{1, \dots, \Gamma\}\), define: 
\(H^{(r)} = \{h \in H \mid \gamma(h) = r\}\) (phase-\(r\) histories), 
\(H_i^{(r)} = H^{(r)} \cap H_i\) (phase-\(r\) histories where \(i\) acts), 
\(Z^{(r)} = Z \cap H^{(r)}\) (phase-\(r\) terminal histories), 
\(\Theta^{(r)} = \{\tilde{\mathcal{H}}_c(h) \mid h \in H^{(r)}\}\) (phase-\(r\) signals), 
\(\Psi_i^{(r)} = \{ \psi \cap \Theta^{(r)} \mid \psi \in \Psi_i \}\) (phase-\(r\) signal observation infosets for \(i\)), 
\(\Psi^{(r)} = (\Psi_i^{(r)})_{i \in \mathcal{N}}\) (phase-\(r\) signal observation partitions), 
\(\mathcal{I}_i^{(r)} = \{I \in \mathcal{I}_i \mid I \subseteq H^{(r)}\}\) (phase-\(r\) infosets for \(i\)), 
and \(\mathcal{I}^{(r)} = (\mathcal{I}_i^{(r)})_{i \in \mathcal{N}}\) (phase-\(r\) information partitions).
    
    \item History indistinguishability criterion: For any \(h, h' \in H_i\) and \(i \in \mathcal{N}\), \(h\) and \(h'\) belong to the same infoset \(I \in \mathcal{I}_i\) if and only if \(\mathcal{H}_{-c}(h) = \mathcal{H}_{-c}(h')\) (identical non-chance action traces) and \(\vartheta_i(\tilde{\mathcal{H}}_c(h)) = \vartheta_i(\tilde{\mathcal{H}}_c(h'))\) (indistinguishable signals).
    
    \item Survival functions: \(\omega = (\omega_i)_{i \in \mathcal{N}}\) is a tuple of survival functions, where for each \(i \in \mathcal{N}\) and \(h \in H\), 
    \[
    \omega_i(h) = \mathbb{I}\{\text{player } i \text{ still participates at } h\},
    \]
    and \(\mathbb{I}\{\cdot\}\) denotes the indicator function (1 if the condition holds, 0 otherwise).
    
    \item Terminal order and utility consistency: In the final phase \(\Gamma\), each signal \(\theta \in \Theta^{(\Gamma)}\) (final signals) induces a total order \(\preceq_\theta\) over \(\mathcal{N}\), which satisfies reflexivity (\(i \preceq_\theta i\) for all \(i\)), totality (for any \(i,j \in \mathcal{N}\), either \(i \preceq_\theta j\) or \(j \preceq_\theta i\)), and transitivity (if \(i \preceq_\theta j\) and \(j \preceq_\theta k\), then \(i \preceq_\theta k\)). For any terminal history \(z \in Z^{(\Gamma)}\) with \(\theta = \tilde{\mathcal{H}}_c(z)\), if \(\omega_i(z)\omega_j(z) = 1\) (both \(i\) and \(j\) survive at \(z\)) and \(i \preceq_\theta j\), then \(u_i(z) \leq u_j(z)\).
\end{itemize}
\end{definition}

Compared to IIGs, SOOGs better capture the core features of hold'em games, with two key refinements aligned to the structure of hold'em games:

First, SOOG's \textbf{signal-action separability} matches "hands vs. betting actions" split in hold'em games. Take a HULH example: Suppose a history $h$ that player 1 holds \(\texttt{A}\heartsuit\texttt{K}\heartsuit\), player 2 holds \(\texttt{J}\heartsuit\texttt{Q}\heartsuit\), and the Flop community cards are \(\texttt{A}\spadesuit\texttt{2}\heartsuit\texttt{3}\heartsuit\); the signal here is \(\theta = \tilde{\mathcal{H}}_c(h) = \texttt{[(A}\heartsuit\texttt{K}\heartsuit\text{), (J}\heartsuit\texttt{Q}\heartsuit\text{)} \mid \texttt{A}\spadesuit\texttt{2}\heartsuit\texttt{3}\heartsuit\text{]}\). SOOG's observation functions map this signal to players' private views: \(\vartheta_1(\theta) = \texttt{[A}\heartsuit\texttt{K}\heartsuit\mid \texttt{A}\spadesuit\texttt{2}\heartsuit\text{3}\heartsuit\text{]}\) (player 1 sees their own hand) and \(\vartheta_2(\theta) = \texttt{[J}\heartsuit\texttt{Q}\heartsuit\mid \texttt{A}\spadesuit\texttt{2}\heartsuit\text{3}\heartsuit\text{]}\) (player 2 sees theirs). Crucially, the betting actions in history \(h\) (e.g., player 1 raises, player 2 calls in Preflop phase) form the non-chance trace \(\mathcal{H}_{-c}(h)\), and separability ensures this trace can combine with other signals (e.g., replacing player 1's hand with \(\texttt{10}\heartsuit\texttt{J}\heartsuit\) to form \(\theta' = \texttt{[(10}\heartsuit\texttt{J}\heartsuit\text{), (J}\heartsuit\texttt{Q}\heartsuit\text{)} \mid \texttt{A}\spadesuit\texttt{2}\heartsuit\text{3}\heartsuit\text{]}\)) to yield a valid hold'em history—unlike IIGs, which treat hands and actions as an indivisible unit.

Second, SOOG's \textbf{total order on final signals} models hold'em's showdown. At showdown, rewards depend on hand strength, which SOOG encodes as a total order \(\preceq_\theta\) over the final signal \(\theta \in \Theta^{(\Gamma)}\) (e.g., player 1's stronger hand gives \(2 \preceq_\theta 1\)). Combined with survival functions (both players survive to showdown, so \(\omega_1(z)=\omega_2(z)=1\)), SOOG's utility constraint \(u_i(z) \leq u_j(z)\) for \(i \preceq_\theta j\) directly reflects the effect of showdown rewards.

These two refinements reveal SOOG signals can be studied independently and have inherent quality hierarchies (even in non-final phases), enabling standalone research and classification as an abstraction technique for this class of imperfect-information games.

\subsection{Signal Observation Abstraction}

We can now model the hand abstraction task using the SOOG framework:

\begin{definition}
In a SOOG \(\mathcal{G}\), \(\alpha = (\alpha_1, \dots, \alpha_N)\) is a signal observation abstraction profile, and \(\Psi^{\alpha}_i\) denotes the set of abstracted signal observation infosets for \(i \in \mathcal{N}\) under \(\alpha\). Each \(\alpha_i: \Theta \mapsto \Psi^{\alpha}_i\) maps a signal \(\theta \in \Theta\) to an abstracted signal observation infoset \(\psi^\alpha \in \Psi^{\alpha}_i\). Furthermore, each \(\psi^\alpha\) can be partitioned into finer signal observation infosets within \(\Psi_i\).
\end{definition}

Given a signal observation abstraction profile \(\alpha = (\alpha_1, \dots, \alpha_n)\) for a SOOG \(\mathcal{G}\), a \textbf{(signal observation) abstracted game} \(\mathcal{G}^\alpha\) is derived by substituting \(\vartheta_i\) with \(\alpha_i\) in \(\mathcal{G}\); signal observation abstraction does not directly alter the game itself but modifies the structure of players' strategy spaces, and since infosets in SOOGs are defined exclusively based on players' signal observations (i.e., the original infoset partition \(\mathcal{I}_i\) is induced by \(\vartheta_i\)), replacing \(\vartheta_i\) with \(\alpha_i\) reshapes the information structure: the abstracted infoset partition \(\mathcal{I}_i^\alpha\) for player \(i\) is defined such that two original infosets \(I, I' \in \mathcal{I}_i\) belong to the same abstracted infoset \(I^\alpha \in \mathcal{I}_i^\alpha\) if and only if there exist histories \(h,h' \in I\) satisfying \(\alpha_i(\tilde{\mathcal{H}}_c(h)) = \alpha_i(\tilde{\mathcal{H}}_c(h'))\) and \(\mathcal{H}_{-c}(h) = \mathcal{H}_{-c}(h')\), and notably, \(\mathcal{I}_i^\alpha\) forms a partition of \(\mathcal{I}_i\)—every original infoset \(I \in \mathcal{I}_i\) is a subset of exactly one abstracted infoset \(I^\alpha \in \mathcal{I}_i^\alpha\) (i.e., \(I \subset I^\alpha\)); in the abstracted game \(\mathcal{G}^\alpha\), each player \(i \in \mathcal{N}\) selects an \(\textbf{abstracted strategy}\) \(\sigma^\alpha_i \in \Sigma^\alpha_i\) (where \(\Sigma^\alpha_i\) denotes the set of all such abstracted strategies for \(i\)), formally, \(\sigma^\alpha_i = \{ \sigma^\alpha_i(I^\alpha, \cdot) \mid I^\alpha \in \mathcal{I}_i^\alpha \}\) with \(\sigma^\alpha_i(I^\alpha, \cdot): A(I^\alpha) \to [0,1]\) being a probability distribution over the legal actions at \(I^\alpha\) (with \(A(I^\alpha) = A(I)\) for all \(I \subset I^\alpha\), as legal actions are invariant across original infosets merged into \(I^\alpha\)) and satisfying \(\sum_{a \in A(I^\alpha)} \sigma^\alpha_i(I^\alpha, a) = 1\) (normalization) for each \(I^\alpha \in \mathcal{I}_i^\alpha\), and when all players select their abstracted strategies, they form an abstracted strategy profile \(\sigma^\alpha = (\sigma^\alpha_1, \ldots, \sigma^\alpha_N)\) with the set of all possible abstracted profiles denoted by \(\Sigma^\alpha = \times_{i \in \mathcal{N}} \Sigma^\alpha_i\); crucially, any abstracted strategy \(\sigma^\alpha_i \in \Sigma^\alpha_i\) can induce a corresponding original strategy \(\sigma_i \in \Sigma_i\) in \(\mathcal{G}\): for every original infoset \(I \in \mathcal{I}_i\) with \(I \subset I^\alpha\) (for some \(I^\alpha \in \mathcal{I}_i^\alpha\)), the action probability distribution of \(\sigma_i\) at \(I\) is set equal to that of \(\sigma^\alpha_i\) at \(I^\alpha\) — i.e., \(\sigma_i(I, a) = \sigma^\alpha_i(I^\alpha, a)\) for all \(a \in A(I)\). A signal observation abstraction profile \(\alpha\) is said to have \(\textbf{perfect-recall}\) if the abstracted game \(\mathcal{G}^\alpha\) induced by \(\alpha\) has perfect-recall; otherwise, it is called an \(\textbf{imperfect-recall}\) abstraction. It should be noted that arbitrary historical information discard represents merely an extreme form of imperfect-recall abstraction.


\section{Evaluation for Hand Abstraction Algorithms} \label{sec:evaluation-poi-froi}

With a suitable mathematical model in place, we can now construct evaluation methods for hand abstraction algorithms. To this end, this section proposes the \textbf{resolution bound} metric, establishes \textbf{potential-aware outcome isomorphism} to estimate the maximum performance limits of existing hand abstraction algorithms, and analyzes their inherent shortcomings.

\subsection{Resolution Bound}

A reasonable evaluation of signal observation abstractions involves identifying the strategy profile with the best performance in the strategy space they construct (e.g., in 2-player zero-sum scenarios, finding the abstraction with the lowest exploitability). Yet this performance-based indirect evaluation is impractical due to computational constraints. We thus need a framework starting directly from the abstractions themselves, with one intuitive direction being to assess their granularity.

\begin{definition}
    In a SOOG, \(\alpha_i\) and \(\beta_i\) are signal observation abstractions for player $i\in \mathcal{N}$. The refinement relationship between \(\alpha_i\) and \(\beta_i\) is defined as follows: If, for any \(\psi^\beta \in \Psi_i^{\beta}\), there exist one or more abstracted signal observation infosets in \(\Psi_i^{\alpha}\) such that their union forms a partition of \(\psi^\beta\), then \(\alpha_i\) is said to refine \(\beta_i\), denoted as \(\alpha_i \sqsupseteq \beta_i\). Furthermore, \(\alpha_i\) is a common refinement of multiple signal observation abstractions \(\alpha_i^{1}, \ldots, \alpha_i^{m}\) for player $i$ if \(\alpha_i \sqsupseteq \alpha_i^{j}\) for all \(j \in \{1, \ldots, m\}\).
\end{definition}

It is hard to judge the quality of two signal observation abstractions without a refinement relationship: in some abstracted signal observation infosets, \(\alpha_i\) is more refined than \(\beta_i\), while in others, \(\beta_i\) is more refined than \(\alpha_i\). However, among those with a refinement relationship, the more refined one has inherent advantages. As \citet{waugh2009abstraction} proved this property for IIGs, we extend it to SOOGs (proof in Appendix~\ref{apdx:proof-of-monotonicity}):

\begin{theorem}[Adapted to SOOG from Theorem 3 in~\cite{waugh2009abstraction}] \label{thm:monotonicity}
Let \(\alpha_i\) and \(\beta_i\) be signal observation abstractions for player $i\in \{1, 2\}$ in a 2-player zero-sum perfect-recall SOOG \(\mathcal{G}\). Let \(\alpha = (\alpha_i, \vartheta_{-i})\) and \(\beta = (\beta_i, \vartheta_{-i})\) be perfect-recall signal observation abstraction profiles, with \(\mathcal{G}^\alpha\) and \(\mathcal{G}^{\beta}\) denoting their induced abstracted games respectively. If \(\alpha_i \sqsupseteq \beta_i\), and \(\sigma^\alpha\) and \(\sigma^\beta\) are the strategies mapped back to the original game from the Nash equilibria of \(\mathcal{G}^\alpha\) and \(\mathcal{G}^{\beta}\) respectively, then \(\epsilon_i(\sigma^\alpha) < \epsilon_i(\sigma^\beta)\).\end{theorem}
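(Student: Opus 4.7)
The plan is to reduce the exploitability comparison to a strategy-space inclusion and then invoke minimax. Since the abstraction profiles $\alpha = (\alpha_i, \vartheta_{-i})$ and $\beta = (\beta_i, \vartheta_{-i})$ both leave player $-i$'s information structure untouched, the opponent's strategy set in either abstracted game coincides with the full $\Sigma_{-i}$; only player $i$'s reachable strategies in $\Sigma_i$ are restricted. Writing $\tilde{\Sigma}_i^\alpha, \tilde{\Sigma}_i^\beta \subseteq \Sigma_i$ for the images of $\Sigma_i^\alpha, \Sigma_i^\beta$ under the lift described in Section~\ref{sec:models}, the proof hinges on the inclusion $\tilde{\Sigma}_i^\beta \subseteq \tilde{\Sigma}_i^\alpha$.

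I would first establish this inclusion. The refinement $\alpha_i \sqsupseteq \beta_i$ means every $\psi^\beta \in \Psi_i^\beta$ decomposes as a disjoint union of some $\psi^\alpha_1, \ldots, \psi^\alpha_k \in \Psi_i^\alpha$; the history indistinguishability criterion of Definition~\ref{def:soog} transfers this to the infoset partitions, so each $I^\beta \in \mathcal{I}_i^\beta$ decomposes as a union $I^\alpha_1 \cup \cdots \cup I^\alpha_k$ with each $I^\alpha_j \in \mathcal{I}_i^\alpha$. Any strategy in $\tilde{\Sigma}_i^\beta$ is constant on every original infoset inside $I^\beta$, hence \emph{a fortiori} constant on each $I^\alpha_j \subseteq I^\beta$, so it lies in $\tilde{\Sigma}_i^\alpha$. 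With this in hand, the Minimax Theorem applied to the perfect-recall two-player zero-sum games $\mathcal{G}^\alpha$ and $\mathcal{G}^\beta$ gives
\[
\min_{\sigma'_{-i} \in \Sigma_{-i}} u_i(\sigma_i^\alpha, \sigma'_{-i}) \;=\; \max_{\sigma_i \in \tilde{\Sigma}_i^\alpha} \min_{\sigma'_{-i} \in \Sigma_{-i}} u_i(\sigma_i, \sigma'_{-i}),
\]
and analogously for $\beta$. Combining with the inclusion and substituting into $\epsilon_i(\sigma) = u_i(\sigma^*) - \min_{\sigma'_{-i}} u_i(\sigma_i, \sigma'_{-i})$ immediately yields $\epsilon_i(\sigma^\alpha) \leq \epsilon_i(\sigma^\beta)$.

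The hard part will be promoting this $\leq$ to the strict $<$ stated in the theorem, because the definition of $\sqsupseteq$ does not explicitly exclude $\alpha_i = \beta_i$. My intended resolution is to read the hypothesis as a strict refinement (the substantive case, which is what Section~\ref{sec:evaluation-poi-froi} relies on), so that some $I^\beta$ truly splits into $k \geq 2$ parts $I^\alpha_1, \ldots, I^\alpha_k$. I would then build an explicit perturbation of $\sigma_i^\beta$ supported on those parts that strictly increases the inner minimization value: informally, any best response against $\sigma_i^\beta$ must hedge across all histories merged inside $I^\beta$, and $\alpha_i$ lets player $i$ unmix that hedge in favor of histories whose locally optimal action disagrees with the merged one. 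Verifying that player $-i$'s subsequent re-optimization cannot erase this gain — typically by a small-perturbation/continuity argument around a Nash equilibrium of $\mathcal{G}^\beta$ — is the only genuinely delicate step; the remainder of the proof is a direct translation of Theorem~3 of~\cite{waugh2009abstraction} from raw IIG infosets to SOOG signal observation infosets, and the signal-action separability of SOOGs ensures this translation is purely cosmetic.
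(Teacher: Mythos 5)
Your core argument is exactly the paper's proof: fix the opponent's strategy space via the shared $\vartheta_{-i}$, show that $\alpha_i \sqsupseteq \beta_i$ yields the lifted strategy-space inclusion $\tilde{\Sigma}_i^\beta \subseteq \tilde{\Sigma}_i^\alpha$, and then run the minimax chain $v_i^* - \epsilon_i(\sigma^\alpha) = \max_{\sigma_i' \in \tilde{\Sigma}_i^\alpha}\min_{\sigma_{-i}} u_i(\sigma_i',\sigma_{-i}) \ge \max_{\sigma_i' \in \tilde{\Sigma}_i^\beta}\min_{\sigma_{-i}} u_i(\sigma_i',\sigma_{-i}) = v_i^* - \epsilon_i(\sigma^\beta)$. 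Up to that point your proposal and the paper's appendix proof coincide. The one place you diverge is instructive: you correctly flag that this only gives $\epsilon_i(\sigma^\alpha) \le \epsilon_i(\sigma^\beta)$, whereas the theorem asserts strict inequality. Be aware that the paper's own proof \emph{also} stops at $\le$ --- the displayed chain in Appendix~\ref{apdx:proof-of-monotonicity} ends with ``$\implies \epsilon_i(\sigma^\alpha) \le \epsilon_i(\sigma^\beta)$'' and never addresses strictness, so the mismatch is a defect of the stated theorem, not of your argument. You should not sink effort into the perturbation/continuity argument you sketch for promoting $\le$ to $<$: strictness is genuinely false in general, since even a strict refinement can be strategically irrelevant (e.g., the split infosets may admit a common optimal action, in which case the two maximin values, and hence the two exploitabilities, coincide). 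The correct fix is to weaken the conclusion to $\epsilon_i(\sigma^\alpha) \le \epsilon_i(\sigma^\beta)$, which is also how the source result in~\cite{waugh2009abstraction} is used downstream in Section~\ref{sec:evaluation-poi-froi} (the paper itself only claims that finer abstractions ``tend to'' perform better).
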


Theorem~\ref{thm:monotonicity} imposes several restrictive conditions: it requires player $i$'s opponents adopt no abstraction, and the profiles \(\alpha\), \(\beta\) satisfy perfect-recall. However, in practice, we need not adhere strictly to all these constraints. Fundamentally, Theorem~\ref{thm:monotonicity} gives key intuition: more granular abstractions tend to enable more competitive strategies when solving the abstracted game.

Since signal observation abstractions—i.e., hand abstraction in hold'em games—are all generated automatically by algorithms, the concept of common refinement can be extended to evaluate the performance of these algorithms. For a given SOOG \(\mathcal{G}\), if there exists a signal observation abstraction \(\alpha_i\) such that, for any set of parameters, the signal observation abstraction \(\alpha_i'\) generated by a signal observation abstraction algorithm can be refined by \(\alpha_i\) (i.e., \(\alpha_i \sqsupseteq \alpha_i'\)), then \(\alpha_i\) serves as the common refinement of the algorithm (to be rigorous, this is restricted to the context of \(\mathcal{G}\) and rational player $i$). The common refinement of a signal observation abstraction algorithm defines the upper bound of its ability to distinguish signal observation infosets, which we refer to as the \textbf{resolution bound}. And drawing on Theorem~\ref{thm:monotonicity}, we gain an intuition that the high-quality solution solved from the abstracted game induced by the resolution bound of an observation abstraction algorithm will tend to have better performance than that solved from the abstracted game induced by the abstractions generated by the algorithm itself—though this performance advantage is not strictly guaranteed. Note that a signal observation algorithm with a finer resolution bound does not guarantee better performance; however, if a coarse resolution bound can be found for a signal observation algorithm, it reflects that the algorithm has certain flaws.

\subsection{Potential-Aware Outcome Isomorphism} \label{sec:paoi}

We now construct a flawed signal observation abstraction for a given SOOG \(\mathcal{G}\), and subsequently prove that this flawed abstraction serves as the resolution bound of existing popular signal observation abstraction algorithms that follow the paradigm of arbitrary historical information discard. Through this analysis, we aim to reveal the inherent defects of current algorithms and elaborate on the underlying causes of these limitations.

Without loss of generality, we construct the signal observation abstraction from the perspective of rational player \( i \). We extend the nature's reaching probability notation \(\pi_c(h)\) (defined for histories \(h\)) to signals \(\theta\): let \(\pi_c(\theta) = \prod\nolimits_{\theta' \sqsubseteq \theta} \xi(\theta'', \theta')\), where \(\theta''\) is \(\theta'\)'s immediate predecessor, and \(\sqsubseteq\) denotes signal predecessor (or equal).

First, for any signal observation infoset \(\psi \in \Psi_i^{(r)}\) in the final phase \(r=\Gamma\), we define a winrate outcome feature as
\begin{equation*}
wo_i^{(r)}(\psi) = \big( wo_i^{(r),0}(\psi), wo_i^{(r),1}(\psi), \ldots, wo_i^{(r),N}(\psi) \big), \label{eq:wof}
\end{equation*}
where we denote \(\mathcal{N}_{-i} = \mathcal{N} \setminus \{i\}\). The components are defined as follows:

\begin{itemize}[left=0cm]
    \item \( wo_i^{(r),0}(\psi) \) denotes the probability that player \( i \) ranks lower than at least one other player. Formally:  
      \[
      wo_i^{(r),0}(\psi) = \frac{\sum_{\theta \in \psi} \pi_c(\theta)\cdot\mathbb{I}\!\left\{ \exists j \in \mathcal{N}_{-i} \text{ such that } j \not\preceq_\theta i \text{ and } i \preceq_\theta j \right\}}{\sum_{\theta \in \psi} \pi_c(\theta)}.
      \]  

    \item For \( l > 0 \), \( wo_i^{(r),l}(\psi) \) denotes the probability that player \( i \) ranks no lower than any other player and ranks higher than exactly \( l-1 \) other players. Formally:  
     \begin{align}
      w&o_i^{(r),l}(\psi) = \label{eq:woirl} \\
      &\frac{\sum_{\theta \in \psi} \pi_c(\theta)\cdot\mathbb{I}\!\left\{ \forall j \in \mathcal{N}_{-i}, j \preceq_\theta i \text{ and } \left| \{ j \in \mathcal{N}_{-i} \mid i \preceq_\theta j \} \right| = l-1 \right\}}{\sum_{\theta \in \psi} \pi_c(\theta)}.\nonumber
      \end{align}
\end{itemize}
where \(\left| \{ \cdot \} \right|\) denotes the cardinality of the set (i.e., the count of elements satisfying the condition).

Specifically, in the 2-player scenario, only \(wo_i^{(\Gamma),0}(\psi)\), \(wo_i^{(\Gamma),1}(\psi)\), and \(wo_i^{(\Gamma),2}(\psi)\) remain. In hold'em games, these correspond respectively to the losing rate, tying rate, and winning rate when enumerating the opponent's hands against the player's own hand.

Then, we construct a \textbf{potential-aware outcome feature (PAOF)} for each signal observation infoset in each phase. Signal observation infosets with identical PAOFs are grouped into the same class, while those with distinct PAOFs belong to different classes; the resulting classes are referred to as \textbf{potential-aware outcome isomorphisms (PAOIs)}. The number of distinct PAOIs in phase \(r\) is denoted as \(\mathcal{C}^{(r)}\).

First, for the final phase \(\Gamma\), the PAOF is defined as:  
\[ pf^{(\Gamma)}(\psi) := wo_i^{(\Gamma)}(\psi). \]  

Second, for a non-final phase \(r\) and a signal observation infoset \(\psi \in \Psi_i^{(r)}\), the PAOF \(pf^{(r)}(\psi)\) is a histogram of the PAOIs from the next phase. Specifically, let \(pi_j^{(r+1)}\) denote the set of signal observation infosets contained in the \(j\)-th PAOI of phase \(r+1\). The \(j\)-th component of the PAOF of \(\psi\) is then given by:

\[ pf_j^{(r)}(\psi) = \frac{\sum_{\theta \in \psi} \pi_c(\theta) \cdot \sum_{\psi' \in pi^{(r+1)}_j}\sum_{\theta' \in \psi'} \xi(\theta, \theta')}{\sum_{\theta \in \psi} \pi_c(\theta)\cdot \sum_{\theta' \in \Theta^{(r+1)}} \xi(\theta, \theta')}.\]

\begin{figure}[h]
  \centering
  \includegraphics[width=\linewidth]{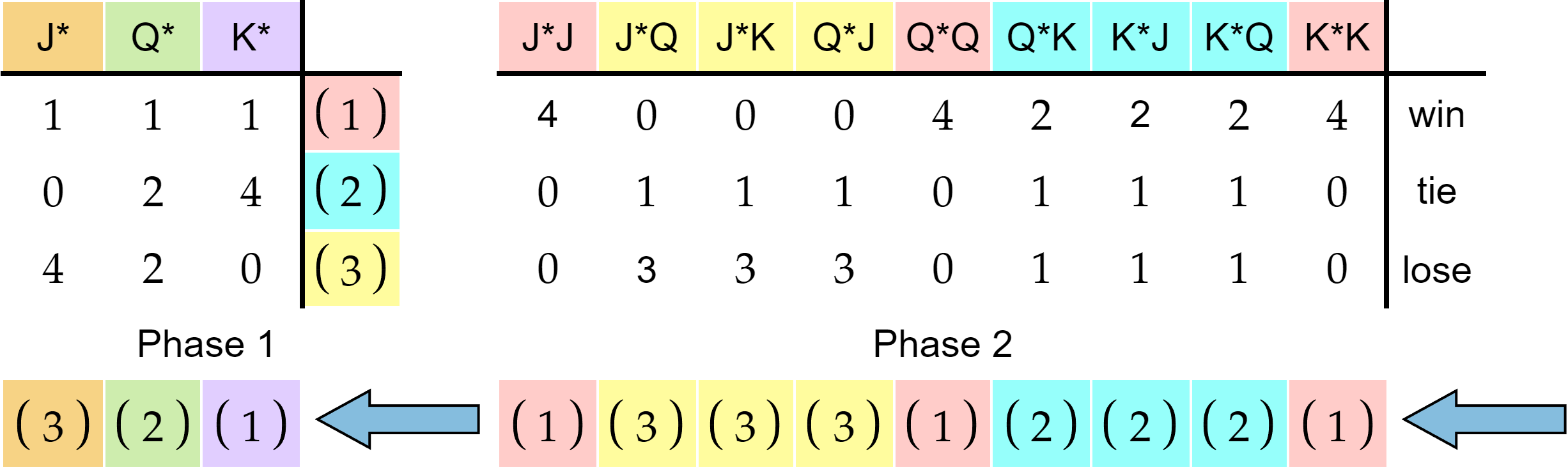}
  \caption{The process of constructing potential-aware outcome isomorphism in Leduc Hold'em.}
  \label{fig:leduc-poi}
\end{figure}

Figure~\ref{fig:leduc-poi} illustrates PAOI construction in Leduc Hold'em (rules in Appendix~\ref{apdx:leduc}). Here, the first letter of each hand denotes a player's private card, and '*' represents the opponent's hidden card. In the second phase, the second letter indicates the revealed community card; each hand corresponds to 4 possible opponent card combinations, yielding 4 signals per infoset. After showdown, we group signal observation infosets into abstracted sets if they share identical win-tie-loss outcome distributions (labeled accordingly). Counts replace win rates here due to equal opponent combinations across hands.

In the first phase, each hand branches into second-phase abstracted signal observation infosets as 5 possible cards are dealt. We group hands with identical branching distributions into abstracted sets; counts again replace probabilities. Leveraging Leduc's perfect-recall (each infoset maps directly to next-phase hands), we enumerate infosets directly instead of individual signals. Notably, this bottom-up construction—where later-phase equivalence classes are built without reference to early-phase information—exemplifies the paradigm of arbitrary historical information discard.

Observations reveal that first-phase abstracted signal observation infosets match the original count (3, indicating strong discriminative power), while the second phase reduces to 3 abstracted sets—far fewer than the original 9—exposing PAOI's limited ability to distinguish infosets. Below are two propositions, with proofs in Appendix~\ref{apdx:proof-of-propositions}.

\begin{proposition} \label{prop:less-ehs}
For a hold'em game modeled as a SOOG $\mathcal{G}$, the PAOI abstraction serves as a resolution bound of algorithm EHS.
\end{proposition}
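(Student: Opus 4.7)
The plan is to show that \textbf{PAOI} is at least as fine as any abstraction \(\alpha_i^{\text{EHS}}\) that algorithm \textbf{EHS} can output, regardless of EHS's clustering parameters. The cleanest route is to prove that each signal observation infoset's EHS scalar is a function of its PAOF; once this is established, any clustering of infosets by EHS scalars is automatically refined by the equality-based grouping that defines PAOI, and the refinement conclusion \(\text{PAOI} \sqsupseteq \alpha_i^{\text{EHS}}\) follows immediately.

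First I would recall the operational definition of EHS: for each \(\psi \in \Psi_i^{(r)}\), EHS assigns a scalar \(s^{(r)}(\psi) \in [0,1]\) equal to the expected showdown win rate (with ties contributing a fractional count), obtained by averaging over all chance continuations to phase \(\Gamma\) and over the opponents' consistent signal completions; the EHS algorithm then clusters \(\Psi_i^{(r)}\) purely according to the scalar \(s^{(r)}(\psi)\). Thus, no matter which clustering parameters (number of buckets, thresholding rule, etc.) are used, any two infosets with identical EHS scalars are placed in the same EHS cluster. So it suffices to exhibit a mapping \(\Phi^{(r)}\) with \(s^{(r)}(\psi) = \Phi^{(r)}\bigl(pf^{(r)}(\psi)\bigr)\).

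Next I would prove this by induction on phases, downward from \(r=\Gamma\). For the base case, the showdown EHS is by construction a fixed linear combination of the winrate outcome components \(wo_i^{(\Gamma),l}(\psi)\) (win, tie contributions, and loss), and \(pf^{(\Gamma)}(\psi) = wo_i^{(\Gamma)}(\psi)\), so the mapping \(\Phi^{(\Gamma)}\) is just this linear functional. For the inductive step, I would unfold the definition of \(s^{(r)}(\psi)\) in terms of chance transitions to phase \(r+1\): using the identity \(s^{(r)}(\psi) = \sum_{j} pf_j^{(r)}(\psi) \cdot \bar{s}^{(r+1)}_j\), where \(\bar{s}^{(r+1)}_j\) is the (well-defined, by the inductive hypothesis) common EHS value on the \(j\)-th PAOI class of phase \(r+1\). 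Here I would lean on the SOOG property that chance transitions depend only on signals (via \(\xi\)) together with signal-action separability, so that the continuation distribution truly factors through the next-phase signal distribution collected in \(pf_j^{(r)}(\psi)\). This yields the desired \(\Phi^{(r)}\).

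The main obstacle will be the algebraic rewriting that turns EHS's ``weighted average over opponent completions and future chance'' into the inner product \(\sum_j pf_j^{(r)}(\psi) \bar{s}^{(r+1)}_j\). Two subtleties need care: the weights inside \(pf_j^{(r)}\) are the chance reaching probabilities \(\pi_c(\theta)\cdot\xi(\theta,\theta')\), so I must verify that EHS's expectation uses the same weighting (modulo a normalization that cancels out across \(\psi\) when the legal chance supports agree, which is guaranteed by the SOOG chance constraint \(A(h)=A(h')\) for histories with equal signal traces); and the inductive hypothesis must supply a common scalar per PAOI class, which is exactly what the inductive claim delivers. Once these are in place, any two infosets with equal PAOFs share the same EHS scalar, hence are never separated by EHS for any parameter choice, proving \(\text{PAOI} \sqsupseteq \alpha_i^{\text{EHS}}\) and establishing PAOI as a resolution bound of EHS.
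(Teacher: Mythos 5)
Your proposal is correct and follows essentially the same route as the paper's proof: a downward induction from the showdown phase showing that the EHS equity of each infoset is determined by its PAOF—via the base-case identification with the winrate outcome feature and the inductive identity \(s^{(r)}(\psi) = \sum_j pf_j^{(r)}(\psi)\,\bar{s}^{(r+1)}_j\)—so that equal PAOFs force equal equities and hence identical EHS clusters for any parameter choice. The only cosmetic difference is that the paper runs the induction separately for the win rate \(w\) and tie rate \(t\) before combining them into equity, whereas you carry the equity scalar through directly; the subtleties you flag (matching the chance weighting inside \(pf_j^{(r)}\) and the cancellation of the normalization constant) are exactly the ones the paper resolves using the uniform-transition and equal-branching properties of hold'em signals.
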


\begin{proposition} \label{prop:less-paa}
For a hold'em game modeled as a SOOG $\mathcal{G}$, the PAOI abstraction serves as a resolution bound of algorithms PAAA and PAAEMD.
\end{proposition}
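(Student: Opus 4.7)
The plan is to prove the refinement by downward induction on phases, working from the final phase $\Gamma$ back to phase $1$. The key structural observation is that both PAAA and PAAEMD operate through a bottom-up clustering scheme: in phase $\Gamma$ each signal observation infoset is represented by a win-rate based feature (EHS for PAAA or the full win/tie/loss histogram for PAAEMD), while in each non-final phase $r<\Gamma$ each infoset is represented by its transition histogram into the clusters already constructed for phase $r+1$. Both algorithms then apply a clustering routine (k-means for PAAA, k-medoids under earth mover's distance for PAAEMD), parameterized by the number of clusters, the initialization, and tie-breaking. To show that PAOI $\sqsupseteq$ the algorithm's output under any parameter choice, I would prove that every cluster produced by the algorithm in every phase is a union of PAOI classes, which is equivalent to saying that any two infosets sharing a PAOI class are assigned to the same algorithmic cluster.

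First I would establish the base case in phase $\Gamma$. There the PAOF coincides with the win-rate outcome vector $wo_i^{(\Gamma)}(\psi)$, and both EHS and the win/tie/loss histogram used by PAAA and PAAEMD are deterministic, infoset-independent functions of $wo_i^{(\Gamma)}(\psi)$ (EHS is a fixed linear combination of its components; the histogram is the vector itself). Hence two infosets inside one phase-$\Gamma$ PAOI class share identical algorithmic features, the Euclidean or EMD distance between them is zero, and their distances to every centroid/medoid coincide, so under any consistent tie-breaking rule they are assigned to the same cluster. Each phase-$\Gamma$ algorithmic cluster is therefore a union of phase-$\Gamma$ PAOI classes.

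For the inductive step, assume the claim holds in phase $r+1$. In phase $r$ the algorithmic feature of $\psi$ is its transition histogram to phase-$(r+1)$ algorithmic clusters, while the PAOF $pf^{(r)}(\psi)$ is the finer histogram over phase-$(r+1)$ PAOI classes. Because every algorithmic cluster in phase $r+1$ is by hypothesis a union of PAOI classes, the algorithmic histogram is obtained from $pf^{(r)}(\psi)$ by summing coordinates along the cluster partition — a deterministic, $\psi$-independent linear map. Thus infosets in a common phase-$r$ PAOI class have identical algorithmic features in phase $r$, and by the same clustering argument as in the base case they end up in the same phase-$r$ algorithmic cluster, closing the induction and yielding $\mathrm{PAOI}\sqsupseteq\alpha_i'$ for every PAAA/PAAEMD output $\alpha_i'$, hence $\mathrm{PAOI}$ is a resolution bound.

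The main obstacle I anticipate is the clustering tie-breaking step: the proof requires that feature-identical infosets are always placed in the same cluster, independent of initialization and iteration order. For Lloyd's k-means this holds under any deterministic tie-breaking rule because two points with zero mutual distance have identical distances to every candidate centroid at every iteration, so after the first assignment step they are inseparable; the analogous argument works for k-medoids under EMD. A secondary technical point is reconciling the normalizing denominator in the definition of $pf^{(r)}$ with the reach-weighted transition counts that the algorithms actually accumulate. This follows from SOOG's chance-factorization: since $\zeta(h,a)=\xi(\tilde{\mathcal{H}}_c(h),\theta')$ depends only on signals, the empirical signal-level transition distribution used by the algorithms coincides exactly with the reach-weighted probability appearing in $pf^{(r)}(\psi)$.
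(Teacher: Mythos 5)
Your proposal is correct and follows essentially the same route as the paper's proof: downward induction on phases, with the base case reusing the final-phase equity identity from the EHS argument, and the inductive step showing that the algorithm's transition histogram is a coordinate-wise coarsening of the PAOF over phase-$(r+1)$ PAOI classes (exactly the paper's identity $h_j(\psi) = \sum_{k:\, pi_k^{(r+1)} \subseteq c_j^{(r+1)}} pf_k^{(r)}(\psi)$). Your explicit treatment of clustering tie-breaking and of the normalization in $pf^{(r)}$ is a welcome refinement of points the paper passes over more quickly, but it does not change the argument's structure.
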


Propositions~\ref{prop:less-ehs} and~\ref{prop:less-paa} suggest that the performance of strategies solved under EHS, PAAA and PAAEMD abstractions will likely be weaker than that of their counterparts under PAOI (details of the PAAA, PAAEMD, and EHS algorithms are provided in the Appendix~\ref{apdx:proof-of-propositions}) abstraction. Nevertheless, we next point out and analyze that PAOI itself is not an abstraction with strong performance.

Table~\ref{tbl:defect} presents the number of distinct signal observation infosets across different phases in HULH (ground truth) — with more detailed rules available in Appendix~\ref{apdx:hulh} — and compares it with those abstracted by LI and PAOI. Notably, PAOI's abstraction follow a spindle-shaped distribution: fewer in early/late phases but more in middle phases. By contrast, ground truth infosets exhibit a clear triangular pattern—gradually increasing with game progression—consistent with natural strategic information accumulation. The LI algorithm effectively preserves this pattern, aligning with the game's inherent information growth. PAOI's spindle-shaped distribution, however, deviates sharply from this baseline, implying systematic late-game information loss. This raises concerns about current outcome-based imperfect-recall hand abstraction algorithms (e.g., EHS and PAAEMD) using PAOI as their resolution bound, as such deviation could significantly compromise the effectiveness of their solved strategies—particularly in late phases.

\begin{table}[bt]
\centering
\resizebox{\columnwidth}{!}{
\begin{tabular}{cccc}
\specialrule{1.2pt}{0pt}{0pt}
\hline
Phase   & No Abstraction                  & LI         & PAOI \\ \hline
Preflop & $\binom{52}{2}$=1326              & 169        & 169                                               \\
Flop    & $\binom{52}{2,3}$=25989600$^*$       & 1286792    & 1137132                                                 \\
Turn    & $\binom{52}{2,3,1}$=1195521600    & 55190538   & 2337912                                                \\
River   & $\binom{52}{2,3,1,1}$=5379847200  & 2428287420 & 20687                                                          \\ \hline
\bottomrule[1.2pt]
\end{tabular}
}
\parbox{\columnwidth}{\small * The notation \(\binom{a}{b,c}\) denotes a combinatorial count: given a set of \(a\) elements in total, it first selects \(b\) elements without replacement, then selects \(c\) elements without replacement from the remaining \(a-b\) elements. The total count is equivalent to \(\binom{a}{b}\cdot\binom{a-b}{c}\).
}
\caption{Quantity of signal observation infosets in unabstracted game and abstracted signal observation infosets for various algorithms in HULH.}
\label{tbl:defect}
\vspace{-20pt} 
\end{table}

To understand why PAOI suffers from systematic information loss in late game, we first clarify its core limitation: it arbitrarily discards all historical information, classifying signal observation infosets solely based on the current and future phases. As Figure~\ref{fig:potential_outcome_isomorphism_information} shows, the number of abstracted signal observation infosets under PAOI is jointly determined by two key factors, whose interplay ultimately leads to information loss. On one hand, inherent signal observation infosets grow with game progression, theoretically supporting more distinct classes. On the other hand, the effectiveness of PAOI's classification—i.e., distinguishing whether two signal observation infosets belong to the same abstract class—depends on the volume of data used to compute their PAOFs. For example, in HULH, determining the PAOF of a signal observation infoset (i.e., a hand) requires enumerating all possible rollout scenarios for that infoset: in the River phase, this involves \(\binom{52-7}{2}\) cases (enumerating the opponent's hole cards), while in the Turn phase, it requires \(\binom{52-6}{2,1}\) cases (enumerating both the opponent's hole cards and the final community card in the River phase). As shown in Figure~\ref{fig:potential_outcome_isomorphism_information}, the data volume needed to determine which abstract class a signal (represented by each pink node) belongs to equals the data volume of the subtree rooted at that node—and this volume decreases progressively in later phases. Less data increases the chances of identical PAOFs for distinct signal observation infosets, mistakenly grouping them into the same equivalence class. This not only reduces the number of distinct abstracted signal observation infosets far below the original count but also amplifies \textbf{excessive abstraction} in late game, ultimately resulting in the spindle-shaped distribution of PAOI classes observed earlier.

\begin{figure}[bt]
\centering
\includegraphics[width=0.5\textwidth]{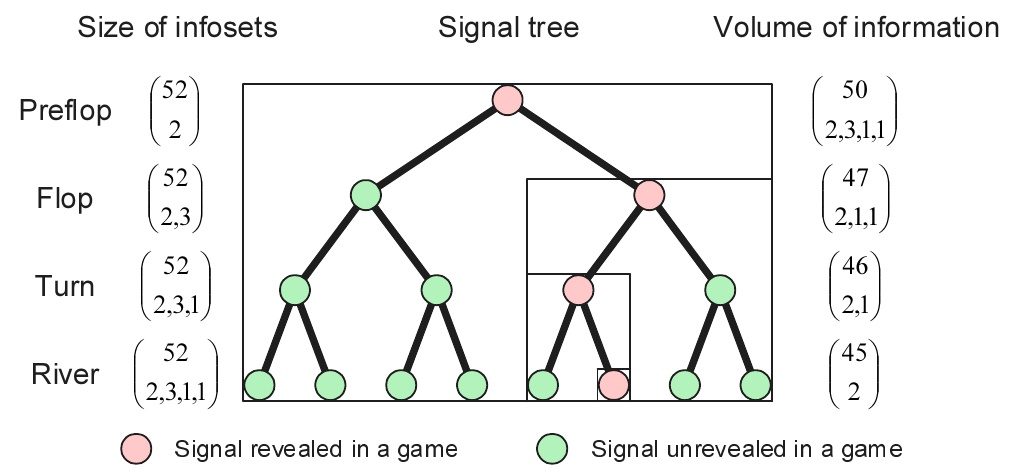}
\caption{Two factors influencing the quantity of distinct abstracted signal observation infosets for algorithms that only use current and future information.}
\label{fig:potential_outcome_isomorphism_information}
\end{figure}

\subsection{Full-Recall Outcome Isomorphism}

We have identified the underlying reasons for PAOI's excessive abstraction phenomenon. To address this, we improve PAOI by incorporating historical information and construct a new abstraction: \textbf{full-recall outcome isomorphism (FROI)}, a special case of \textbf{k-recall outcome isomorphism (k-ROI)}.

k-ROI is developed from PAOI. For any \(\psi \in \Psi_i^{(r)}\) of phase \(r\) in a SOOG \(\mathcal{G}\) (for player \(i\)), we construct its \textbf{\(k\)-recall outcome feature (k-ROF)} — a \(k\)-dimensional vector — defined as:  
\begin{equation}
rf_i^{(r,k)}(\psi) = \big( pi^{(r)}(\psi^{(r)}), pi^{(r-1)}(\psi^{(r-1)}), \ldots, pi^{(r-k)}(\psi^{(r-k)}) \big), \nonumber
\end{equation}  
where \(\psi^{(r')}\) (for \(r' < r\)) denotes the predecessor of \(\psi\) at phase \(r'\) (by the perfect-recall assumption, the predecessors of all signals in \(\psi\) at the same phase \(r'\) belong to the same signal observation infoset), and \(pi^{(r')}(\psi^{(r')})\) represents the label (index) of the PAOI class to which \(\psi^{(r')}\) belongs.

\begin{figure}[tb]
\centering
\includegraphics[width=\linewidth]{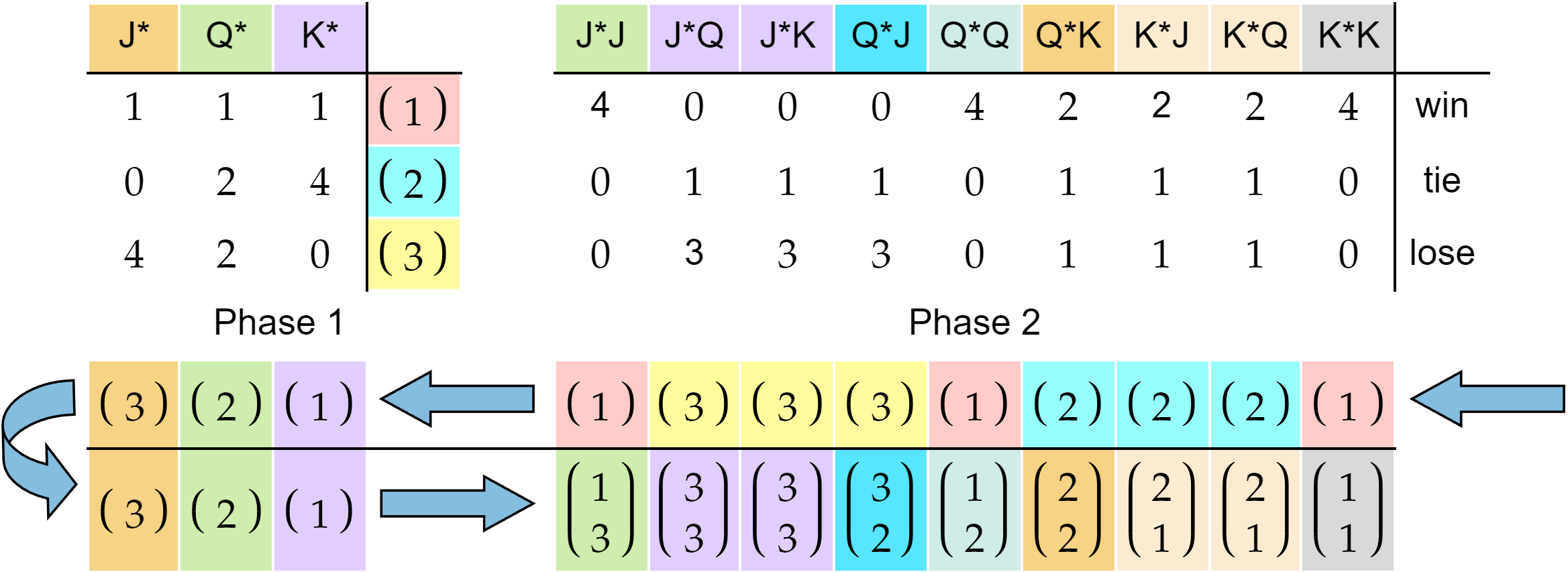}
\caption{The process of constructing full-recall outcome isomorphism in Leduc Hold'em.}
\label{fig:leduc_kroi}
\vspace{-15pt} %
\end{figure}

Figure~\ref{fig:leduc_kroi} illustrates the process of constructing 0-ROI for hands in phase 1 and 1-ROI for hands in phase 2 of Leduc Hold'em. It can be observed that this is a two-step process: first, we compute the PAOFs and PAOIs for the current phase and all preceding phases in a bottom-up manner; then, we calculate the k-ROF and k-ROI for each hand in a top-down manner. For phase $r$, we refer to the \((r-1)\)-ROI as \textbf{full-recall outcome isomorphism (FROI)}. Notably, "full-recall" here does not imply perfect-recall in the abstracted game; it merely indicates inclusion of all prior phase information. This distinction is critical: future clustering-based abstraction algorithms targeting FROI as their resolution bound will likely remain imperfect-recall, yet they demonstrate a viable path for integrating historical context into imperfect-recall abstraction frameworks.

Table~\ref{tbl:krwil} presents the number of distinct abstracted classes identifiable by k-ROI (with \(k = 0, \dots, r-1\)) in phase r of HULH. It is evident that as the value of k increases (i.e., more historical information is incorporated), the number of distinct abstracted classes identifiable by k-ROI grows. Moreover, the number of abstracted classes of FROI across different phases—169, 1241210, 42040233, and 638585633—exhibits the triangular pattern observed in both the no abstraction and LI (Table~\ref{tbl:defect}).

\begin{table*}[ht]
\centering
\resizebox{\textwidth}{!}{%
\begin{tabular}{ccccccccccc}
\specialrule{1.2pt}{0pt}{0pt}
\hline
         & Preflop & \multicolumn{2}{c}{Flop} & \multicolumn{3}{c}{Turn}     & \multicolumn{4}{c}{River}                \\
Phase    & 1 & \multicolumn{2}{c}{2} & \multicolumn{3}{c}{3}     & \multicolumn{4}{c}{4} \\
         \cmidrule(lr){2-2} \cmidrule(lr){3-4} \cmidrule(lr){5-7} \cmidrule(lr){8-11}
Recall (k)   & 0       & 0           & 1           & 0       & 1        & 2        & 0     & 1        & 2         & 3         \\ 
k-ROI     & 169     & 1137132     & 1241210     & 2337912 & 38938975 & 42040233 & 20687 & 39792212 & 586622784 & 638585633 \\ \hline \bottomrule[1.2pt]
\end{tabular}%
}
\caption{The number of signal observation infoset equivalence classes identified by k-ROI in each phase and $k$ of HULH.}
\label{tbl:krwil}
\vspace{-10pt} %
\end{table*}

\section{Experiment}\label{sec:experiment}

To provide more direct evidence that abstraction methods arbitrarily discarding historical information suffer significant performance deficiencies, we evaluate the performance of \(\epsilon\)-Nash equilibria solved in abstracted games when applied to the original game.

First, we select the experimental environment. Given HULH's excessive complexity, we opt for a simplified setting. Notably, conventional simplified environments (e.g., Kuhn Poker, Leduc Hold'em, Flop Hold'em) have no more than 2 phases. As shown in Table~\ref{tbl:krwil}, even in HULH (a game with an extensive range of hands), excessive abstraction is not prominent at the 2-phase stage (1137132 vs. 1241210). We thus construct a 3-phase variant named Numeral211 Hold'em, whose rules are in Appendix~\ref{apdx:numeral211}. Table~\ref{tbl:krwi-kroi-numeral211} presents the number of (abstracted) signal observation infosets identified by various methods (and the unabstracted setting) in Numeral211 Hold'em, revealing a significant discrepancy between PAOI (100, 2250, 3957) and FROI (100, 2260, 51228)—especially in phase 3.

\begin{table}[bt]
  \centering
  \resizebox{\columnwidth}{!}{%
\begin{tabular}{ccccccc}
\specialrule{1.2pt}{0pt}{0pt}
\hline
         & Phase 1 & \multicolumn{2}{c}{Phase 2} & \multicolumn{3}{c}{Phase 3}                   \\
No Abstraction & 780     & \multicolumn{2}{c}{29640} & \multicolumn{3}{c}{1096680}       \\
LI       & 100     & \multicolumn{2}{c}{2260} & \multicolumn{3}{c}{62020}       \\
         \cmidrule(lr){2-2} \cmidrule(lr){3-4} \cmidrule(lr){5-7} 
Recall   & 0       & 0           & 1           & 0       & 1        & 2               \\ 
k-ROI     & 100     & 2250        & 2260     & 3957 & 51176 & 51228  \\ \hline   \bottomrule[1.2pt]
\end{tabular}%
}
\caption{Quantity of signal observation infosets in unabstracted game and abstracted signal observation infosets for various algorithms in Numeral211 Hold'em.}
\label{tbl:krwi-kroi-numeral211}
\vspace{-10pt} %
\end{table}

We conduct two evaluations for each set of signal observation abstractions to be assessed, denoted as \(\alpha = (\alpha_1, \alpha_2)\). The first is \textbf{asymmetric abstraction}, which follows the setup in Theorem~\ref{thm:monotonicity}: we construct two abstractions \(\alpha' = (\alpha_1, \vartheta_2)\) and \(\alpha'' = (\vartheta_1, \alpha_2)\) (where \(\vartheta_1/\vartheta_2\) denote no abstraction for player 1/2), use CFR to solve $\epsilon$-Nash equilibria \(\sigma'\) and \(\sigma''\) for abstracted games \(\mathcal{G}^{\alpha'}\) and \(\mathcal{G}^{\alpha''}\), then concatenate these into a joint strategy \(\sigma = (\alpha'_1, \alpha''_2)\) and evaluate its exploitability in the original game \(\mathcal{G}\). This method has theoretical guarantees—finer abstractions yield stronger-performing strategies—but it does not reflect typical hand abstraction scenarios: in large-scale games, solving strategies with no opponent abstraction is computationally infeasible due to enormous space overhead. We thus adopt a second scenario, \textbf{symmetric abstraction}: given \(\alpha\), we solve the $\epsilon$-Nash equilibrium of the abstracted game \(\mathcal{G}^{\alpha}\) and evaluate its exploitability in \(\mathcal{G}\). This scenario is susceptible to \textbf{abstraction pathology}~\cite{waugh2009abstraction}—a rare phenomenon where finer abstractions may unexpectedly produce worse-performing strategies.

\begin{figure}[h]
  \centering
  \includegraphics[width=\linewidth]{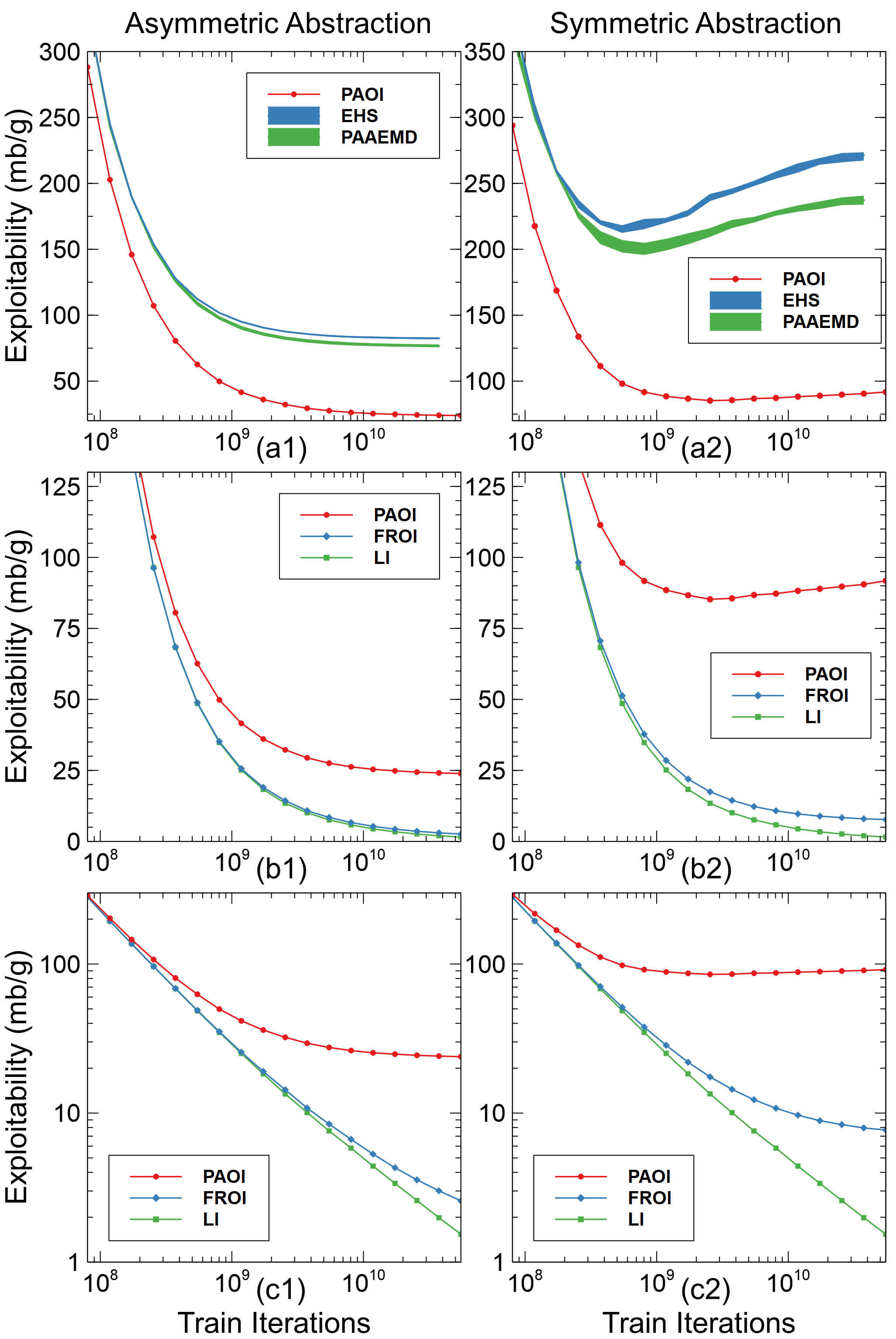}
  \caption{Experimental results on Numeral211 Hold'em.}
  \label{fig:experiment}
  \vspace{-20pt} %
\end{figure}

In the first experiment, we verify that the performance of mainstream signal observation abstraction algorithms is hardly superior to that of PAOI. Specifically, we use the EHS and PAAEMD algorithms: we apply no abstraction in phase 1, and construct abstraction classes accounting for 1/10 of PAOI's maximum discriminative capacity in phases 2 and 3 (i.e., 100-225-396). Given the sensitivity of clustering algorithms to initial values, we generate 5 independent groups of abstractions for each algorithm. Subfigures (a1) and (b1) of Figure~\ref{fig:experiment} present the results under the asymmetric abstraction and symmetric abstraction scenarios, respectively; it can be observed that the performance of both EHS and PAAEMD is significantly weaker than that of PAOI—particularly in terms of exploitability, where higher values indicate poorer performance.  


In the second experiment, we examine performance among PAOI, FROI, and LI. Notably, LI is a lossless abstraction based on hand suit rotation and serves as the ground truth—crucially, it cannot be further compressed via clustering. Figure~\ref{fig:experiment}(a2, b2) show results for asymmetric and symmetric abstraction: PAOI-derived solutions perform significantly worse than FROI and LI, while FROI's performance is very close to LI's—especially in asymmetric abstraction. To distinguish their subtle gap, Figure~\ref{fig:experiment}(a3, b3) present results on a logarithmic scale for clearer resolution of minor differences.

These results demonstrate that discarding historical information systematically degrades the performance of abstraction algorithms, whereas incorporating historical information elevates the performance upper bound of such algorithms.

\section{Conclusion}

This paper presents a suitable mathematical model and evaluation metric for the hand abstraction task, and further identifies a low-resolution bound in current outcome-based imperfect-recall algorithms—PAOI—which limits AI performance in signal observation abstraction, especially when scaling computational resources. This limitation stems from the arbitrary discarding of historical information. To address this, we introduce FROI, a hand abstraction that incorporates historical information and achieves a finer resolution bound—offering key insights into how to weave historical context into abstraction design. Experiments confirm that such oversimplified bounds (e.g., PAOI) impair AI performance, validating the value of FROI's approach to historical information.

Notably, FROI is a hand abstraction rather than a hand abstraction algorithm, meaning it cannot adjust the number of equivalence classes (e.g., via clustering) as algorithmic methods do—this points to a critical direction for future work. We will develop signal abstraction algorithms capable of exceeding the PAOI bound and targeting the FROI bound.

\bibliographystyle{ACM-Reference-Format} 
\bibliography{sample}


\newpage
\onecolumn
\appendix

\section{Rules of Hold'em Games}\label{apdx:game_rules}

\subsection{Heads-Up Limit Hold'em (HULH)} \label{apdx:hulh}
Heads-up limit hold'em (HULH) is a two-player, fixed-bet variant of Texas Hold'em, preserving the core mechanics of community cards and standard poker hand rankings while optimizing for two-player strategic interactions. Its complete rule set is outlined below:
\begin{enumerate}
\item \textbf{Betting Phases:} The game unfolds across four sequential phases—Preflop, Flop, Turn, and River. Each phase commences after the corresponding cards are dealt, with players taking turns to act (fold, check, call, bet, or raise).
\item \textbf{Deck and Hand Construction:}
A standard 52-card deck (excluding Jokers) is used. Before the first betting phase (Preflop), each player receives two "private cards" that only they can view.
Over the course of the game, five shared "community cards" are revealed in phases: three cards (called the "Flop") are exposed first, followed by a single "Turn" card, and finally a single "River" card. To determine the winner, each player combines their two hole cards with the five community cards to form the strongest possible 5-card hand.
\item \textbf{Hand Rankings:} Follow the standard poker hierarchy, ranging from Straight Flush (the highest-ranked hand) to High Card (the lowest-ranked hand). Detailed rankings and their descriptions are provided in Table \ref{tbl:hulh_hand_rank}.
\item \textbf{Blind Structure:} To initiate the pot, two forced bets (blinds) are required: the Small Blind (SB) and Big Blind (BB). The SB is set to half the value of the BB, and these bets are posted by the two players before any cards are dealt.
\item \textbf{Action Order:} Prior to the Flop (Preflop), the player who posted the SB acts first. For all subsequent phases (Flop, Turn, River), the action starts with the player who posted the BB, alternating thereafter.
\item \textbf{Standard AI Research Setup:} In academic studies, HULH typically uses a 100-chip SB, with each player beginning the game with a 20,000-chip stack—equivalent to 200 times the BB value (200 BB effective stack depth).
\item \textbf{Fixed Betting Rules:}
Preflop: All bets and raises are capped at the BB amount (e.g., if the BB is 200 chips, Preflop bets/raises are limited to 200 chips).
Postflop (Flop, Turn, River): The size of bets and raises doubles to twice the BB value, maintaining consistent incremental increases across phases.
Raise Limits: Each betting phase allows a maximum of one initial bet plus three raises (e.g., bet → raise → re-raise → cap raise). This constraint prevents unbounded betting sequences and ensures strategic focus on hand strength rather than bet sizing.
\end{enumerate}

\begin{table}[tb]
\centering
\begin{tabular}{cccp{5cm}c}
\hline
\textbf{Rank} & \textbf{Hand}         & \textbf{Prob.} & \multicolumn{1}{c}{\textbf{Description}}                                                                                     & \textbf{Example}                                      \\ \hline
1             & Straight Flush       & 0.0015\%       & Five consecutive-rank cards of the same suit. Ties are resolved by comparing the highest card in the sequence.                                                                 & \(\texttt{A}\spadesuit\texttt{K}\spadesuit\texttt{Q}\spadesuit\texttt{J}\spadesuit\texttt{T}\spadesuit\) \\
2             & Four of a Kind       & 0.0240\%       & Four cards of identical rank. Ties are broken by the rank of the four matching cards.                                                                                              & \(\texttt{8}\spadesuit\texttt{8}\heartsuit\texttt{8}\diamondsuit\texttt{8}\clubsuit\texttt{K}\heartsuit\) \\
3             & Full House           & 0.1441\%       & A combination of three cards of one rank and two cards of another (a "three-of-a-kind" plus a "pair"). Ties are resolved first by the rank of the three-of-a-kind, then by the rank of the pair.                                                                          & \(\texttt{Q}\spadesuit\texttt{Q}\heartsuit\texttt{Q}\diamondsuit\texttt{5}\clubsuit\texttt{5}\spadesuit\) \\
4             & Flush                & 0.1965\%       & Five cards of the same suit, with no consecutive rank sequence. Ties are broken by comparing the highest card, then the next highest, and so on.                                                     & \(\texttt{9}\heartsuit\texttt{7}\heartsuit\texttt{6}\heartsuit\texttt{4}\heartsuit\texttt{2}\heartsuit\) \\
5             & Straight             & 0.3925\%       & Five consecutive-rank cards of mixed suits. Ties are resolved by the highest card in the sequence.                                                                      & \(\texttt{K}\spadesuit\texttt{Q}\heartsuit\texttt{J}\clubsuit\texttt{T}\diamondsuit\texttt{9}\heartsuit\) \\
6             & Three of a Kind      & 2.1128\%       & Three cards of identical rank. Ties are broken first by the rank of the three matching cards, then by the highest remaining card ("kicker"), and finally by the second remaining kicker.                                      & \(\texttt{7}\spadesuit\texttt{7}\heartsuit\texttt{7}\diamondsuit\texttt{K}\clubsuit\texttt{4}\heartsuit\) \\
7             & Two Pair             & 4.7539\%       & Two separate pairs of identical-rank cards. Ties are resolved first by the rank of the higher pair, then by the rank of the lower pair, and finally by the remaining kicker.                                                              & \(\texttt{J}\spadesuit\texttt{J}\heartsuit\texttt{5}\clubsuit\texttt{5}\spadesuit\texttt{2}\heartsuit\) \\
8             & One Pair             & 42.2569\%      & Two cards of identical rank. Ties are broken first by the rank of the pair, then by the highest kicker, followed by the next highest, and so on.                                          & \(\texttt{9}\spadesuit\texttt{9}\heartsuit\texttt{A}\clubsuit\texttt{K}\spadesuit\texttt{3}\heartsuit\) \\
9             & High Card            & 50.1177\%      & No pairs or higher-ranked combinations. Ties are resolved by comparing the highest card, then the next highest, and so on.                                                                                & \(\texttt{A}\spadesuit\texttt{K}\heartsuit\texttt{J}\clubsuit\texttt{8}\diamondsuit\texttt{3}\heartsuit\) \\ \hline
\end{tabular}
\caption{Hand ranks of HULH.}
\label{tbl:hulh_hand_rank}
\end{table}

\subsection{Leduc Hold'em} \label{apdx:leduc}
Leduc Hold'em is a simplified two-player poker variant, widely used in imperfect-information game research due to its balance of strategic depth and computational tractability (it avoids the complexity of full Texas Hold'em while retaining core abstraction challenges). Its rules are as follows:
\begin{enumerate}

\item \textbf{Deck Composition:} Uses a stripped-down 6-card deck, consisting of two suits (e.g., hearts and spades) and three ranks: Jack (J), Queen (Q), and King (K). Each rank-suit combination (e.g., \(\texttt{J}\heartsuit\), \(\texttt{J}\spadesuit\), \(\texttt{Q}\heartsuit\), \(\texttt{Q}\spadesuit\), \(\texttt{K}\heartsuit\), \(\texttt{K}\spadesuit\)) is present exactly once, resulting in 3 ranks × 2 suits = 6 unique cards.
\item \textbf{Betting Phases:} The game features two distinct betting phases—"Preflop" (before community cards are revealed) and a "Postflop" (after a single community card is exposed).
\item \textbf{Card Dealing:}\\
Preflop: Each player receives one private card, dealt face down.\\
Postflop: A single "community card" is dealt face up in the center of the table, shared by both players.
\item \textbf{Hand Rankings:} Hand strength is determined by combining the player's hole card with the community board card, using the following hierarchy (from highest to lowest):\\
\textbf{Pair:} Two cards of the same rank (e.g., hole card \(\texttt{J}\heartsuit\)+ board card \(\texttt{J}\spadesuit\)).\\
\textbf{High Card:} No pair—hand strength is determined by the rank of the hole card (since the board card is shared, ties are resolved by the hole card rank: K > Q > J).
\item \textbf{Blind and Ante Structure:} The game uses a single forced bet (ante) instead of blinds: each player posts 1 chip into the pot at the start of the hand to initiate action.
\item \textbf{Betting Rules:}
Both betting phases (Preflop and Postflop) use a fixed bet size, typically set to 2 chips (double the ante).
Each phase allows a maximum of one bet and one raise (e.g., bet → raise → cap), with no additional raises permitted. Players may fold, call the current bet, or raise (if no raise has been made in the phase).
\item \textbf{Showdown:} If neither player folds by the end of the Postflop phase, both players reveal their hole cards. The player with the higher-ranked hand wins the entire pot; if hands are tied (e.g., both have high-card Q), the pot is split evenly between the two players.
\end{enumerate}

\subsection{Numeral211 Hold'em} \label{apdx:numeral211}
Numeral211 Hold'em is a custom poker variant designed for hand abstraction research. It is more complex than simple games like Kuhn Poker but substantially less computationally intensive than HULH, making it an ideal testbed for evaluating abstraction algorithms. Its rules are detailed below:
\begin{enumerate}
\item \textbf{Ante Structure:} To start each hand, both players post a mandatory 5-chip ante into the pot, ensuring there is always a prize to compete for.
\item \textbf{Betting Phases:} The game progresses through three sequential betting phases—Preflop (phase 1), Flop (phase 2), and Turn (phase 3). Each phase begins after the corresponding cards are dealt, with players acting in turn (fold, check, call, bet, or raise).
\item \textbf{Deck and Hand Construction:}
Derived from a modified standard 52-card deck: all Kings, Queens, and Jacks are removed, leaving 40 cards total. The remaining cards include ranks 2 through 9, 10 (\texttt{T}), and Ace (\texttt{A}), across four suits.\\
Preflop: Each player receives one private card.\\
Community Cards: Two shared community cards are revealed in phases: one card during the Flop phase and one card during the Turn phase. Players combine their single hole card with the two community cards to form a 3-card hand, with strength determining the winner.
\item \textbf{Hand Rankings:} Follow a simplified 3-card hierarchy (from highest to lowest), as detailed in Table \ref{tbl:numeral211_hand_rank}: Straight Flush, Three of a Kind, Straight, Flush, Pair, and High Card.
\item \textbf{Betting Options:} Players may fold, check, call, bet, or raise in all three phases (similar to HULH). Key constraints include:
Fixed bet sizes: 10 chips for Preflop bets/raises, and 20 chips for Postflop (Flop and Turn) bets/raises.
Raise Limits: Each betting phase allows a maximum of four total bets/raises (e.g., bet → raise → re-raise → cap raise), preventing excessive betting sequences.
\item \textbf{Showdown:} If both players choose to call the final bet of the Turn phase, they reveal their hole cards. The player with the higher-ranked 3-card hand (private cards + community cards) wins the entire pot; in the case of a tie, the pot is split equally between the two players.
\end{enumerate}
\begin{table}[tb]
\centering
\begin{tabular}{cccp{5cm}c}
\hline
\textbf{Rank} & \textbf{Hand} & \textbf{Prob.} & \multicolumn{1}{c}{\textbf{Description}} & \textbf{Example} \\ \hline
1 & Straight Flush & 0.321\% & Three consecutive-rank cards of the same suit. Ties are resolved by the highest card in the sequence. & \(\texttt{T}\spadesuit\texttt{9}\spadesuit\texttt{8}\spadesuit\) \\
2 & Three of a Kind & 1.587\% & Three cards of identical rank. Ties are broken by the rank of the matching cards. & \(\texttt{T}\spadesuit\texttt{T}\heartsuit\texttt{T}\clubsuit\) \\
3 & Straight & 4.347\% & Three consecutive-rank cards of mixed suits. Ties are resolved by the highest card rank. & \(\texttt{T}\spadesuit\texttt{9}\heartsuit\texttt{8}\clubsuit\) \\
4 & Flush & 15.799\% & Three cards of the same suit, with no consecutive rank sequence. Ties are broken by the highest card, then the second highest, and finally the third highest. & \(\texttt{T}\spadesuit\texttt{8}\spadesuit\texttt{6}\spadesuit\) \\
5 & Pair & 34.065\% & Two cards of identical rank. Ties are resolved first by the rank of the pair, then by the rank of the remaining third card. & \(\texttt{T}\spadesuit\texttt{T}\heartsuit\texttt{8}\clubsuit\) \\
6 & High Card & 43.881\% & No pairs or higher-ranked combinations. Ties are broken by comparing the highest card, then the second highest, and finally the third highest. & \(\texttt{T}\spadesuit\texttt{8}\heartsuit\texttt{6}\clubsuit\) \\ \hline
\end{tabular}
\caption{Hand ranks of Numeral211 Hold'em.}
\label{tbl:numeral211_hand_rank}
\end{table}

\section{Proof of Theorem~\ref{thm:monotonicity}} \label{apdx:proof-of-monotonicity}

Theorem~\ref{thm:monotonicity} states: Let \(\alpha_i\) and \(\beta_i\) be signal observation abstractions for player $i\in \{1, 2\}$ in a 2-player zero-sum perfect-recall SOOG \(\mathcal{G}\). Let \(\alpha = (\alpha_i, \vartheta_{-i})\) and \(\beta = (\beta_i, \vartheta_{-i})\) be perfect-recall signal observation abstraction profiles, with \(\mathcal{G}^\alpha\) and \(\mathcal{G}^{\beta}\) denoting their induced abstracted games respectively. If \(\alpha_i \sqsupseteq \beta_i\), and \(\sigma^\alpha\) and \(\sigma^\beta\) are the strategies mapped back to the original game from the Nash equilibria of \(\mathcal{G}^\alpha\) and \(\mathcal{G}^{\beta}\) respectively, then \(\epsilon_i(\sigma^\alpha) < \epsilon_i(\sigma^\beta)\).

\begin{proof}
To prove Theorem \ref{thm:monotonicity}, we first link abstraction refinement to strategy space inclusion, then use exploitability definition, minimax properties for perfect-recall zero-sum SOOGs, and this inclusion to derive the result. By premise \(\alpha_i \sqsupseteq \beta_i\) (\(\alpha_i\) is finer than \(\beta_i\)), finer infoset partitioning implies all \(\beta_i\)-consistent strategies are \(\alpha_i\)-consistent, while \(\alpha_i\) allows additional granular strategies—thus player \(i\)'s strategy space satisfies \(\Sigma_i^\alpha \supseteq \Sigma_i^\beta\). Since \(\alpha = (\alpha_i, \vartheta_{-i})\) and \(\beta = (\beta_i, \vartheta_{-i})\) share \(\vartheta_{-i}\), the opponent's strategy space is fixed (\(\Sigma_{-i}^\alpha = \Sigma_{-i}^\beta = \Sigma_{-i}\)), so we focus on player \(i\)'s abstraction impact.
We begin with the standard exploitability definition for player \(i\) in \(\mathcal{G}\), and derive the key chain:
\[
\begin{aligned}
\epsilon_i(\sigma^\alpha) &= v_i^* - \min_{\sigma_{-i} \in \Sigma_{-i}} u_i(\sigma^\alpha, \sigma_{-i}), \quad \epsilon_i(\sigma^\beta) = v_i^* - \min_{\sigma_{-i} \in \Sigma_{-i}} u_i(\sigma^\beta, \sigma_{-i}) \\
\implies v_i^* - \epsilon_i(\sigma^\alpha) &= \min_{\sigma_{-i} \in \Sigma_{-i}} u_i(\sigma^\alpha, \sigma_{-i}) \\
&= \max_{\sigma_i' \in \Sigma_i^\alpha} \min_{\sigma_{-i} \in \Sigma_{-i}} u_i(\sigma_i', \sigma_{-i}) \quad \text{(Minimax property $\sigma^\alpha$ is $\mathcal{G}^\alpha$ equilibrium, maximizing min utility over $\Sigma_i^\alpha$)} \\
&\ge \max_{\sigma_i' \in \Sigma_i^\beta} \min_{\sigma_{-i} \in \Sigma_{-i}} u_i(\sigma_i', \sigma_{-i}) \\
&= \min_{\sigma_{-i} \in \Sigma_{-i}} u_i(\sigma^\beta, \sigma_{-i}) \quad \text{(Minimax property $\sigma^\beta$ is $\mathcal{G}^\beta$ equilibrium, matching max min utility over $\Sigma_i^\beta$)} \\
&= v_i^* - \epsilon_i(\sigma^\beta) \\
\implies \epsilon_i(\sigma^\alpha) \le \epsilon_i(\sigma^\beta)
\end{aligned}
\]

\end{proof}

\section{Proofs of Propositions}
\label{apdx:proof-of-propositions}
In this section, we prove Proposition~\ref{prop:less-ehs} (PAOI is EHS's resolution bound) and Proposition~\ref{prop:less-paa} (PAOI is PAAA/PAAEMD's resolution bound), using the SOOG framework and main-text notation. To analyze signal extension across phases (key for proving PAOI's role), we first define extended signals: for any phase \(r' \geq r\), let
\begin{equation}
S_\psi^{(r')} := \left\{ \theta \in \Theta^{(r')} \mid \exists \theta' \in \psi,\ \theta' \sqsubseteq \theta \right\}, \label{eq:extended-signals}
\end{equation}
where \(\psi \in \Psi_i^{(r)}\) (phase-\(r\) signal observation infoset for some $i$), \(\theta' \sqsubseteq \theta\) means \(\theta'\) is a prefix of \(\theta\) (SOOG signal sequence), and \(S_\psi^{(r')}\) denotes all phase-\(r'\) signals extending any in \(\psi\). This notation is used in subsequent proofs to formalize current-future signal relationships for abstraction consistency.

\subsection{Proof of Proposition~\ref{prop:less-ehs}}
\label{subsec:proof-less-ehs}

Proposition~\ref{prop:less-ehs} states: For a hold'em game modeled as a SOOG $\mathcal{G}$, the PAOI abstraction serves as a resolution bound of algorithm EHS. We first clarify the core logic of the EHS algorithm, then proceed with the proof.

\paragraph{EHS Algorithm}
EHS (Expected  Hand Strength) is a foundational lossy hand abstraction algorithm for 2-player hold'em games. It quantifies hand strength using \textbf{expected showdown equity} and clusters signals into contiguous equity ranges. Its key steps (adapted to the SOOG framework) are as follows:

\begin{enumerate}[left=0cm]
\item \textbf{Expected Showdown Equity Calculation}:  
   For any signal observation infoset \(\psi \in \Psi_i^{(r)}\) (player \(i\)'s phase-\(r\) signal set), EHS computes the expected showdown equity as:
   \[
   \text{equity}(\psi) = w(\psi) + \frac{1}{2}t(\psi) + 0 \cdot l(\psi).
   \]
   where:
   \begin{itemize}
       \item \(w(\psi)\) (winning rate): Probability that player \(i\) wins at showdown if the game proceeds from \(\psi\);
       \item \(t(\psi)\) (tying rate): Probability that player \(i\) ties at showdown if the game proceeds from \(\psi\);
       \item \(l(\psi)\) (losing rate): Probability that player \(i\) loses at showdown if the game proceeds from \(\psi\).  
   \end{itemize}
   
    We further formalize \(w(\psi)\), \(t(\psi)\), and \(l(\psi)\). For 2-player games (where \(-i\) denotes the opponent of \(i\), with \(-i = 2\) if \(i = 1\) and \(-i = 1\) if \(i = 2\)), the three rates are explicitly defined using final phase (\(r' = \Gamma\)) extended signals, weighted by nature's contribution \(\pi_c(\theta)\):
   \[
   w(\psi) = \frac{\sum_{\theta \in S_\psi^{(\Gamma)}} \pi_c(\theta)\cdot\mathbb{I}\!\left\{ i \not\preceq_\theta -i \text{ and } -i \preceq_\theta i \right\}}{\sum_{\theta \in S_\psi^{(\Gamma)}} \pi_c(\theta)},
   \]
   \[
   t(\psi) = \frac{\sum_{\theta \in S_\psi^{(\Gamma)}} \pi_c(\theta)\cdot\mathbb{I}\!\left\{ i \preceq_\theta -i \text{ and } -i \preceq_\theta i \right\}}{\sum_{\theta \in S_\psi^{(\Gamma)}} \pi_c(\theta)},
   \]
   \[
   l(\psi) = \frac{\sum_{\theta \in S_\psi^{(\Gamma)}} \pi_c(\theta)\cdot\mathbb{I}\!\left\{ i \preceq_\theta -i \text{ and } -i \not\preceq_\theta i \right\}}{\sum_{\theta \in S_\psi^{(\Gamma)}} \pi_c(\theta)}.
   \] 
   
   It is easy to verify that \(w(\psi) + t(\psi) + l(\psi) = 1\): for any final phase signal \(\theta \in \Theta^{(\Gamma)}\), the indicator functions satisfy
   \[
   \mathbb{I}\!\left\{ i \not\preceq_\theta -i \text{ and } -i \preceq_\theta i \right\} + \mathbb{I}\!\left\{ i \preceq_\theta -i \text{ and } -i \preceq_\theta i \right\} + \mathbb{I}\!\left\{ i \preceq_\theta -i \text{ and } -i \not\preceq_\theta i \right\} = 1.
   \]
   Summing over all \(\theta \in S_\psi^{(\Gamma)}\) with weights \(\pi_c(\theta)\) and normalizing by \(\sum_{\theta \in S_\psi^{(\Gamma)}} \pi_c(\theta)\) confirms the identity.

\item \textbf{Range-Based Clustering}:  
   Given \(n\) target clusters (a user-specified parameter), EHS partitions \(\Psi_i^{(r)}\) into \(n\) contiguous equity ranges. For example:
   \begin{itemize}
       \item Cluster 1: \(\text{equity}(\psi) \in [0, 1/n]\)
       \item Cluster 2: \(\text{equity}(\psi) \in (1/n, 2/n]\)\\
       \(\vdots\)
       \item Cluster \(n\): \(\text{equity}(\psi) \in ((n-1)/n, 1]\)
   \end{itemize}

   Signal observation infosets with identical equity values are guaranteed to fall into the same cluster.
\end{enumerate}

\begin{proof}[Proof of Proposition~\ref{prop:less-ehs}]

To prove PAOI is a resolution bound of EHS for a hold'em game modeled as a SOOG \(\mathcal{G}\), we need to show: for any two signal observation infosets \(\psi, \psi' \in \Psi_i^{(r)}\) in the same PAOI class (i.e., \(\psi, \psi' \in pi_j^{(r)}\) for some \(j\)), \(\psi\) and \(\psi'\) are assigned to the same EHS cluster for any number of target clusters \(n\).  

EHS clusters by contiguous equity ranges, so this holds if \(\psi\) and \(\psi'\) have identical \(\text{equity}\). To confirm this equity identity, we first show \(w(\psi) = w(\psi')\) and \(t(\psi) = t(\psi')\) for \(\psi, \psi' \in pi_j^{(r)}\)—a result that directly implies \(\text{equity}(\psi) = \text{equity}(\psi')\) (since \(\text{equity} = w + \frac{1}{2}t\)). We explicitly prove \(w(\psi) = w(\psi')\) below, and the proof for \(t(\psi) = t(\psi')\) follows the exact same logic.

\paragraph{Base Case: Final Phase \(r = \Gamma\)}
For the final phase (\(r = \Gamma\)), any infoset \(\psi \in \Psi_i^{(\Gamma)}\) is a set of terminal signals itself. By the definition of extended signals (Equation~\eqref{eq:extended-signals}), this means \(S_\psi^{(\Gamma)} = \psi\).  

Combining this with EHS's winning rate definition, we directly link \(w(\psi)\) to PAOI's winrate outcome feature component:  
\[
w(\psi) = \frac{\sum_{\theta \in S_\psi^{(\Gamma)}} \pi_c(\theta)\cdot\mathbb{I}\!\left\{ i \not\preceq_\theta -i \text{ and } -i \preceq_\theta i \right\}}{\sum_{\theta \in S_\psi^{(\Gamma)}} \pi_c(\theta)} = \frac{\sum_{\theta \in \psi} \pi_c(\theta)\cdot\mathbb{I}\!\left\{ i \not\preceq_\theta -i \text{ and } -i \preceq_\theta i \right\}}{\sum_{\theta \in \psi} \pi_c(\theta)} = wo_i^{(\Gamma),2}(\psi).
\]  

For \(\psi, \psi' \in pi_j^{(\Gamma)}\) (same PAOI class in the final phase), their PAOFs satisfy \(pf^{(\Gamma)}(\psi) = pf^{(\Gamma)}(\psi')\) (by definition of PAOI classes). This implies their winrate components are equal: \(wo_i^{(\Gamma),2}(\psi) = wo_i^{(\Gamma),2}(\psi')\). Thus:  
\[
w(\psi) = wo_i^{(\Gamma),2}(\psi) = wo_i^{(\Gamma),2}(\psi') = w(\psi').
\]  

The base case holds.

\paragraph{Inductive Step: Assume Hold for Phase \(r+1\), Prove for Phase \(r\)}
\subparagraph{Inductive Hypothesis: }
For phase \(r+1\), any two infosets \(\psi^1, \psi^2 \in \Psi_i^{(r+1)}\) in the same PAOI class (i.e., \(\psi^1, \psi^2 \in pi_k^{(r+1)}\) for some \(k\)) satisfy \(w(\psi^1) = w(\psi^2)\). We denote this common winning rate for all infosets in \(pi_k^{(r+1)}\) as \(w(pi_k^{(r+1)})\).  

\subparagraph{To Prove: }
For phase \(r\), any two infosets \(\psi, \psi' \in \Psi_i^{(r)}\) in the same PAOI class (i.e., \(\psi, \psi' \in pi_j^{(r)}\) for some \(j\)) satisfy \(w(\psi) = w(\psi')\).

We use the perfect-recall property of SOOGs and PAOI class characteristics to link \(w(\psi)\) to the PAOF \(pf_j^{(r)}(\psi)\):  
\begin{align*}
w(\psi) 
&= \frac{\sum_{\theta \in S_\psi^{(\Gamma)}} \pi_c(\theta)\cdot\mathbb{I}\!\left\{ i \not\preceq_\theta -i \text{ and } -i \preceq_\theta i \right\}}{\sum_{\theta \in S_\psi^{(\Gamma)}} \pi_c(\theta)}
\\
&= \frac{\sum_{\theta \in S_\psi^{(\Gamma)}} \pi_c(\theta)\cdot\mathbb{I}\!\left\{ i \not\preceq_\theta -i \text{ and } -i \preceq_\theta i \right\} \cdot \sum_{\psi' \in \Psi_i^{(r+1)}} \mathbb{I}\!\left\{ \theta \in S_{\psi'}^{(\Gamma)} \right\}}{\sum_{\theta \in S_\psi^{(\Gamma)}} \pi_c(\theta)}
  \quad \text{(any } \theta \in \Theta^{(\Gamma)} \text{ belongs to exactly one } S_{\psi'}^{(\Gamma)}, \text{ so the inner sum = 1)}
\\
&= \frac{\sum_{\psi' \in \Psi_i^{(r+1)}} \sum_{\theta \in S_\psi^{(\Gamma)}} \pi_c(\theta)\cdot\mathbb{I}\!\left\{ \theta \in S_{\psi'}^{(\Gamma)} \right\} \mathbb{I}\!\left\{ i \not\preceq_\theta -i \text{ and } -i \preceq_\theta i \right\}}{\sum_{\theta \in S_\psi^{(\Gamma)}} \pi_c(\theta)}
  \quad \text{(swap summation order for finite signal sets)}
\\
&= \frac{\sum_{\psi' \in \Psi_i^{(r+1)}} \sum_{\theta \in S_{\psi}^{(\Gamma)} \cap S_{\psi'}^{(\Gamma)}} \pi_c(\theta)\cdot\mathbb{I}\!\left\{ i \not\preceq_\theta -i \text{ and } -i \preceq_\theta i \right\}}{\sum_{\theta \in S_\psi^{(\Gamma)}} \pi_c(\theta)}
  \quad \text{(restrict to } \theta \text{ in the intersection of } S_\psi^{(\Gamma)} \text{ and } S_{\psi'}^{(\Gamma)} \text{)}
\\
&= \frac{\sum\limits_{j}^{\mathcal{C}^{(r+1)}} \sum_{\psi' \in pi_j^{(r+1)}} \sum_{\theta \in S_{\psi}^{(\Gamma)} \cap S_{\psi'}^{(\Gamma)}} \pi_c(\theta)\cdot\mathbb{I}\!\left\{ i \not\preceq_\theta -i \text{ and } -i \preceq_\theta i \right\}}{\sum_{\theta \in S_\psi^{(\Gamma)}} \pi_c(\theta)}
  \quad \text{(group by phase-\((r+1)\) PAOI class } pi_j^{(r+1)} \text{)}
\\
&= \frac{\sum\limits_{j}^{\mathcal{C}^{(r+1)}} \left( \sum_{\psi' \in pi_j^{(r+1)}} \left( \sum_{\theta \in S_{\psi'}^{(\Gamma)}} \pi_c(\theta) \right) \cdot w(\psi') \right)}{\sum_{\theta \in S_\psi^{(\Gamma)}} \pi_c(\theta)}
  \quad \text{(perfect-recall implies } S_\psi^{(\Gamma)} \cap S_{\psi'}^{(\Gamma)} = S_{\psi'}^{(\Gamma)} \text{ or } \emptyset; \text{ rearrange } w(\psi') = \frac{\sum \pi_c(\theta)\cdot\mathbb{I}\{\cdot\}}{\sum \pi_c(\theta)} \text{)}
\\
&= \sum\limits_{j}^{\mathcal{C}^{(r+1)}} \underbrace{\frac{\sum_{\psi' \in pi_j^{(r+1)} \cap S_\psi^{(r+1)}} \sum_{\theta \in S_{\psi'}^{(\Gamma)}} \pi_c(\theta)}{\sum_{\theta \in S_\psi^{(\Gamma)}} \pi_c(\theta)}}_{pf_j^{(r)}(\psi)} \cdot w(pi_j^{(r+1)})
\end{align*}

The key equivalence here relies on two critical properties of hold'em games. First, every signal \(\theta\) in the same phase has an identical number of successor signals across all future phases. For example, each pre-flop signal in Texas Hold'em branches into the same number of flop signals. Second, the transition probabilities \(\xi(\theta, \theta')\) are uniform, meaning \(\xi(\theta, \theta') = 1/M\) for any \(\theta'\) that is a valid successor of \(\theta\), where \(M\) is the fixed number of successors per signal (guaranteed by the first property). These properties ensure that the total weighted final descendants of a signal set factor into a product of the sum of \(\pi_c(\theta)\) for signals \(\theta \in \psi\) and a constant term. This constant term, which accounts for uniform transitions through all future phases, is invariant across signals and thus cancels out in the numerator and denominator of the fraction. As a result, the fraction reduces to exactly the revised PAOF definition:  

\[ pf_j^{(r)}(\psi) = \frac{\sum_{\theta \in \psi} \pi_c(\theta) \cdot \sum_{\psi' \in pi^{(r+1)}_j}\sum_{\theta' \in \psi'} \xi(\theta, \theta')}{\sum_{\theta \in \psi} \pi_c(\theta)\cdot \sum_{\theta' \in \Theta^{(r+1)}} \xi(\theta, \theta')}. \]  

Additionally, by the inductive hypothesis, all \(\psi' \in pi_j^{(r+1)}\) share the same winning rate \(w(pi_j^{(r+1)})\), justifying the replacement of \(w(\psi')\) with the class-level rate.

Since \(\psi, \psi' \in pi_j^{(r)}\) (same PAOI class), their PAOFs satisfy \(pf_j^{(r)}(\psi) = pf_j^{(r)}(\psi')\) for all \(j\); meanwhile, \(w(pi_j^{(r+1)})\) is a constant for each phase-\((r+1)\) PAOI class \(j\). Thus:  
\[
w(\psi) = \sum_{j} pf_j^{(r)}(\psi) \cdot w(pi_j^{(r+1)}) = \sum_{j} pf_j^{(r)}(\psi') \cdot w(pi_j^{(r+1)}) = w(\psi').
\]  

The inductive step holds.

\paragraph{Extension to \(\text{equity}(\psi) = \text{equity}(\psi')\)}
As noted earlier, the proof for \(t(\psi) = t(\psi')\) (tie rate) is identical to that of \(w(\psi) = w(\psi')\): replacing the winning indicator \(\mathbb{I}\{i \not\preceq_\theta -i \text{ and } -i \preceq_\theta i\}\) with the tie indicator \(\mathbb{I}\{i \preceq_\theta -i \text{ and } -i \preceq_\theta i\}\) leads to \(t(\psi) = t(\psi')\) via the same base case and inductive reasoning.  

Substituting into EHS's equity formula gives:  
\[
\text{equity}(\psi) = w(\psi) + \frac{1}{2}t(\psi) = w(\psi') + \frac{1}{2}t(\psi') = \text{equity}(\psi').
\]

\paragraph{Conclusion}
All signal observation infosets in the same PAOI class have identical \(\text{equity}\) for all phases \(r\) in the hold'em SOOG \(\mathcal{G}\), so they are assigned to the same EHS cluster for any \(n\). Thus, the PAOI abstraction serves as a resolution bound of EHS.  

Proposition~\ref{prop:less-ehs} is proven.
\end{proof}

\subsection{Proof of Proposition~\ref{prop:less-paa}}
\label{subsec:proof-less-paa}
Proposition~\ref{prop:less-paa} states: For a hold'em game modeled as a SOOG \(\mathcal{G}\), the PAOI abstraction serves as a resolution bound of algorithm PAAEMD. We first clarify the core logic of the PAAEMD algorithm (with reused EHS content clearly marked), then proceed with the proof.
\paragraph{PAAEMD Algorithm}
PAAEMD (Potential-Aware Abstraction with Earth Mover's Distance) is a state-of-the-art lossy hand abstraction algorithm for 2-player hold'em games. It reuses EHS's equity calculation for final phase initialization and adopts EHS-consistent L2-based clustering for final phase cluster partitioning, while extending with phase-wise transition histograms and EMD for non-final phase clustering. Its key steps (adapted to the SOOG framework) are as follows:
\begin{enumerate}[left=0cm]
\item \textbf{Final Phase Equity Calculation (Reused from EHS)}:
PAAEMD inherits EHS's expected showdown equity definition for final phase (\(r = \Gamma\)): for any signal observation infoset \(\psi \in \Psi_i^{(\Gamma)}\), \(\text{equity}(\psi) = w(\psi) + \frac{1}{2}t(\psi) + 0 \cdot l(\psi)\)  (consistent with EHS, including the definition of $w(\psi), t(\psi), l(\psi)$; see Section~\ref{subsec:proof-less-ehs} for details).
\item \textbf{Final Phase L2-Based Clustering}:
Using the equity values from Step (1), PAAEMD applies \(k\)-means clustering to partition \(\Psi_i^{(\Gamma)}\) into \(m^{(\Gamma)}\) clusters (denoted \(c^{(\Gamma)}_1, c^{(\Gamma)}_2, \dots, c^{(\Gamma)}_{m^{(\Gamma)}}\)). The distance metric is squared L2 distance (Euclidean distance squared) between equity values.
\item \textbf{Non-Final Phase Histogram Construction and EMD Clustering}:
For non-final phases (\(r < \Gamma\)), PAAEMD links to phase \(r+1\) clusters (from Step (2) or recursive Step (3)) via two sub-steps:\\
1) \textit{Transition Histogram Construction}: For any signal observation infoset \(\psi \in \Psi_i^{(r)}\), build a histogram \(h(\psi) \in [0,1]^{m^{(r+1)}}\) where the \(j\)-th element is the conditional probability of transitioning to phase \(r+1\) cluster \(c^{(r+1)}_j\):\[
   h_j(\psi) = \frac{\sum\limits_{\theta \in \psi} \pi_c(\theta) \cdot \sum\limits_{\psi' \in c^{(r+1)}_j}\sum\limits_{\theta' \in \psi'} \xi(\theta, \theta')}{\sum_{\theta \in \psi} \pi_c(\theta)\cdot \sum\limits_{\theta' \in \Theta^{(r+1)}} \xi(\theta, \theta')}.
   \]
2) \textit{EMD-Based Clustering}: Apply \(k\)-means to partition \(\Psi_i^{(r)}\) into \(m^{(r)}\) clusters (denoted \(c^{(r)}_1, c^{(r)}_2, \dots, c^{(r)}_{m^{(r)}}\)), using Earth Mover's Distance (EMD) between histograms as the distance metric.
\end{enumerate}
\begin{proof}[Proof of Proposition~\ref{prop:less-paa}]
To prove PAOI is a resolution bound of PAAEMD for a hold'em game modeled as a SOOG \(\mathcal{G}\), we need to show: for any two signal observation infosets \(\psi, \psi' \in \Psi_i^{(r)}\) in the same PAOI class (i.e., \(\psi, \psi' \in pi_j^{(r)}\) for some \(j\)), \(\psi\) and \(\psi'\) are assigned to the same PAAEMD cluster for any number of target clusters \(m^{(r)}\).
PAAEMD's clustering depends on equity (final phase) or transition histograms (non-final phases), so this holds if: (1) final phase: \(\text{equity}(\psi) = \text{equity}(\psi')\); (2) non-final phases: \(h(\psi) = h(\psi')\) (implying EMD = 0). We reuse EHS's result for (1) and prove (2) via mathematical induction.
\paragraph{Base Case: Final Phase \(r = \Gamma\)}
From the proof of Proposition~\ref{prop:less-ehs} (Base Case), we already established:
\textit{Any two infosets \(\psi, \psi' \in \Psi_i^{(\Gamma)}\) in the same PAOI class (i.e., \(pf^{(\Gamma)}(\psi) = pf^{(\Gamma)}(\psi')\)) have identical expected showdown equity (\(\text{equity}(\psi) = \text{equity}(\psi')\))}.
PAAEMD uses squared L2 distance for final phase clustering—since \(\text{equity}(\psi) = \text{equity}(\psi')\), their distance to all cluster centroids is identical, so they are assigned to the same cluster. The base case holds.
\paragraph{Inductive Step: Assume Hold for Phase \(r+1\), Prove for Phase \(r\)}
\subparagraph{Inductive Hypothesis:}
For phase \(r+1\), any two infosets \(\psi^1, \psi^2 \in \Psi_i^{(r+1)}\) in the same PAOI class (i.e., \(\psi^1, \psi^2 \in pi_k^{(r+1)}\) for some \(k\)) satisfy \(h(\psi^1) = h(\psi^2)\) (if \(r+1 < \Gamma\)) or \(\text{equity}(\psi^1) = \text{equity}(\psi^2)\) (if \(r+1 = \Gamma\)). We denote this common histogram (or equity) for all infosets in \(pi_k^{(r+1)}\) as \(h(pi_k^{(r+1)})\) (or \(\text{equity}(pi_k^{(r+1)})\)).
\subparagraph{To Prove:}
For phase \(r\), any two infosets \(\psi, \psi' \in \Psi_i^{(r)}\) in the same PAOI class (i.e., \(\psi, \psi' \in pi_j^{(r)}\) for some \(j\)) satisfy \(h(\psi) = h(\psi')\).
\subparagraph{Derivation of \(h(\psi)\) via PAOF}
Starting from PAAEMD's transition histogram definition (Algorithm Step (3)), we use the perfect-recall property of SOOGs and PAOI class characteristics to link \(h(\psi)\) to the PAOF \(pf_j^{(r)}(\psi)\) (consistent with EHS's PAOF reasoning in Section~\ref{subsec:proof-less-ehs}):
\begin{align*}
h_j(\psi)
&= \frac{\sum\limits_{\theta \in \psi} \pi_c(\theta) \cdot \sum\limits_{\psi' \in c^{(r+1)}_j}\sum\limits_{\theta' \in \psi'} \xi(\theta, \theta')}{\sum\limits_{\theta \in \psi} \pi_c(\theta)\cdot \sum\limits_{\theta' \in \Theta^{(r+1)}} \xi(\theta, \theta')}
\\
&= \frac{\sum\limits_{\theta \in \psi} \pi_c(\theta) \cdot \sum\limits_{k: pi_k^{(r+1)} \subseteq c^{(r+1)}_j} \sum\limits_{\psi' \in pi_k^{(r+1)}}\sum\limits_{\theta' \in \psi'} \xi(\theta, \theta')}{\sum\limits_{\theta \in \psi} \pi_c(\theta)\cdot \sum\limits_{\theta' \in \Theta^{(r+1)}} \xi(\theta, \theta')}
  \quad \text{(by inductive hypothesis: each } pi_k^{(r+1)} \text{ is exactly assigned to one } c^{(r+1)}_j \text{ for some } j\text{)}
\\
&= \sum\limits_{k: pi_k^{(r+1)} \subseteq c^{(r+1)}_j} \underbrace{\frac{\sum\limits_{\theta \in \psi} \pi_c(\theta) \cdot \sum\limits_{\psi' \in pi_k^{(r+1)}}\sum\limits_{\theta' \in \psi'} \xi(\theta, \theta')}{\sum\limits_{\theta \in \psi} \pi_c(\theta)\cdot \sum\limits_{\theta' \in \Theta^{(r+1)}} \xi(\theta, \theta')}}_{pf_k^{(r)}(\psi)}
\end{align*}

Since \(\psi, \psi' \in pi_j^{(r)}\) (same PAOI class), their PAOFs satisfy \(pf_k^{(r)}(\psi) = pf_k^{(r)}(\psi')\) for all \(k\); meanwhile, the grouping of \(k\) by \(c^{(r+1)}_j\) is a constant for phase \(r+1\). Thus:\(
h_j(\psi) = \sum_{k: pi_k^{(r+1)} \subseteq c^{(r+1)}_j} pf_k^{(r)}(\psi) = \sum_{k: pi_k^{(r+1)} \subseteq c^{(r+1)}_j} pf_k^{(r)}(\psi') = h_j(\psi').
\)

For all \(j\), \(h_j(\psi) = h_j(\psi')\), so \(h(\psi) = h(\psi')\). The inductive step holds.
\paragraph{Extension to EMD Clustering Consistency}
PAAEMD uses EMD between transition histograms for non-final phase clustering (Algorithm Step 3(2)). If \(h(\psi) = h(\psi')\), the EMD between them is zero—meaning \(\psi\) and \(\psi'\) have identical distance to all cluster centroids, so they are assigned to the same cluster for any \(m^{(r)}\).
\paragraph{Conclusion}
All signal observation infosets in the same PAOI class have identical equity (final phase, reused from EHS) and identical transition histograms (non-final phases, via induction) for all phases \(r\) in the hold'em SOOG \(\mathcal{G}\), so they are assigned to the same PAAEMD cluster. Thus, the PAOI abstraction serves as a resolution bound of PAAEMD.
Proposition~\ref{prop:less-paa} is proven.
\end{proof}

\end{document}